\documentclass[11pt]{article}

\usepackage[margin=1in]{geometry}
\usepackage{amsmath, amsthm, amssymb, amsfonts}
\usepackage{hyperref}
\usepackage{enumerate}

\newtheorem{theorem}{Theorem}
\newtheorem{lemma}[theorem]{Lemma}

\DeclareMathOperator*{\argmin}{arg\,min}
\newcommand{\opt}{\text{OPT}}
\newcommand{\sat}{\text{SAT}}
\newcommand{\E}{\mathbb{E}}
\newcommand{\R}{\mathbb{R}}

\newcommand{\DP}{\text{DP}}
\newcommand{\one}{\mathbb{I}}

\title{\bf Deterministic Stackelberg Strategies\\in Robotaxi Assignment Games}

\author{Ioannis Caragiannis\\Aarhus University \&\\Google Research\\\url{iannis@cs.au.dk} \and Kostas Kollias\\Google Research\\\url{kostaskollias@google.com} \and Mohammad Roghani\\Google Research\\\url{roghani@google.com} \and Aaron Schild\\Google Research\\\url{aschild@google.com} \and Ali Kemal Sinop\\Google Research\\\url{asinop@google.com}}

\date{}

\allowdisplaybreaks

\begin{document}

\maketitle

\begin{abstract}
\noindent In autonomous mobility duopolies, platforms compete for riders based on vehicle proximity. A critical operational challenge---faced by an incumbent proactively defending against a market entrant---is how to dispatch fleets to minimize demand poached by a reactive competitor. We model this competition as a zero-sum robotaxi assignment game, focusing on computing optimal deterministic Stackelberg strategies for the leader. We formulate this problem as an exact single-level integer linear program and use its relaxation to design a polynomial-time approximation algorithm achieving an additive error from the optimum. We argue that this is not far from best possible by proving that deciding whether a strictly positive payoff is achievable is NP-complete, and that approximating the optimal payoff within an additive error scaling linearly with the number of riders is NP-hard. We also provide an exact polynomial-time dynamic programming algorithm for computing the optimal Stackelberg payoff when riders and vehicles are on the Euclidean line. Finally, we complement our theoretical findings with an empirical evaluation using real-world ride-hailing data. The experiments demonstrate that our LP-based algorithm significantly outperforms its worst-case guarantee in practice, while lightweight heuristics achieve highly competitive payoffs with negligible computational overhead, offering practical solutions for large-scale deployments.
\end{abstract}

\section{Introduction}
The rapid evolution of autonomous vehicle technology is fundamentally transforming urban transportation, giving rise to commercial robotaxi platforms (such as Waymo, Baidu's Apollo Go, and Zoox). This transition marks a decisive structural shift from traditional ride-hailing services---such as Uber, Lyft, or conventional taxi fleets---which rely on independent human drivers who exercise localized autonomy regarding ride acceptance and routing. In contrast, autonomous mobility services operate under centralized control, allowing platforms to execute globally coordinated, fleet-wide vehicle deployments. The economic viability and long-term prosperity of these capital-intensive platforms critically depend on their ability to secure a large fraction of the rider population across their operating areas. In practice, however, these platforms do not operate in a vacuum; urban markets naturally foster an environment where multiple competing robotaxi companies vie for the very same riders. Consequently, evaluating a firm's true potential to capture and sustain a large customer base crucially hinges on understanding this strategic platform competition.

This competitive dynamic naturally manifests as a sequential interaction, where one platform must commit to its operational deployment before its rival. A quintessential scenario arises when an incumbent firm, already established in an urban market, seeks to proactively assign its fleet to safeguard its customer base against an aggressive incoming entrant. Conversely, an ambitious newcomer entering the market may commit to early service guarantees across key travel corridors, compelling the incumbent to react. In either setting, the first-moving platform must anticipate how a competitor will position vehicles to poach riders. We capture this strategic tension by formulating the interaction as a zero-sum \emph{robotaxi assignment game} played under the {\em Stackelberg} framework. The game features two competing platforms---Firm~1 (the \emph{leader}) and Firm~2 (the \emph{follower})---vying for a shared population of spatially distributed riders with equal-sized fleets of robotaxis. The interaction unfolds in two stages: the leader first commits to a fleet-wide assignment of its vehicles to riders. Observing this deployment, the follower responds with its own vehicle assignment. Riders act deterministically, choosing whichever platform provides the closer vehicle (and thus a shorter wait time). Within this setting, the leader's central goal is to identify an \emph{optimal Stackelberg strategy}---a defensive assignment that minimizes the follower's ability to exploit spatial vulnerabilities and guarantees the largest possible customer base.

Under this sequential model, the central question for the leader is how to compute an \emph{optimal Stackelberg strategy}: a matching that proactively minimizes the follower's capacity to exploit spatial vulnerabilities and guarantees the largest possible customer base. From an abstract game-theoretic perspective, finding an optimal randomized (mixed) Stackelberg strategy in a two-player zero-sum game reduces to computing a minimax equilibrium. However, because each platform's strategy space encompasses all perfect matchings, even formulating this problem compactly is a significant challenge: a naive normal-form representation is exponentially large, and designing a polynomial-sized linear program requires non-trivially encoding the follower's combinatorial response. While we show that this mixed-strategy game can indeed be compactly characterized in polynomial time via our single-level formulation---by coupling the leader's marginal assignments with a dual minimum vertex cover representing the follower's poaching response---deploying randomized strategies remains operationally non-viable for physical fleets. A platform cannot dispatch vehicles according to a probability distribution over millions of potential 
matchings: doing so precludes proactive vehicle pre-positioning, destabilizes urban traffic routing, and introduces unacceptable variance into passenger pickup times. Ensuring operational viability, predictability, and reliable service-level guarantees mandates strict, deterministic commitments.

Therefore, our focus lies squarely on computing optimal \emph{deterministic} (pure) Stackelberg strategies. By enforcing a pure strategy space, the problem departs from the continuous domain and enters the combinatorial one. Identifying these optimal defensive matchings becomes computationally demanding, requiring the leader to anticipate the combinatorial cascade of the follower's discrete best response. Specifically, because the follower solves a maximum-cardinality bipartite matching problem to poach riders, the leader faces an intrinsically discrete, nested $\max\text{-}\min$ bilevel optimization problem: assigning a vehicle to protect one rider inevitably shifts fleet capacity away from another, triggering non-local vulnerabilities across the network. Resolving this combinatorial tension requires overcoming the nested nature of the follower's response and developing dedicated algorithmic tools to compute, approximate, and delineate the complexity boundaries of optimal deterministic commitments.

\subsection{Our contribution and techniques}
Our main technical contribution is a comprehensive study of the computational complexity and approximability of optimal deterministic Stackelberg strategies in robotaxi assignment games.

First, we resolve the nested min-max formulation by introducing the concept of a ``stealing graph.'' For any strategy played by the leader, the stealing graph encodes all possible valid assignments the follower can utilize to provide strictly shorter distances to the riders. By applying K\H{o}nig's theorem to this stealing graph, we translate the follower's maximum-cardinality matching best response into a minimum vertex cover. This equivalence allows us to circumvent the intractable min-max bottleneck and express the computation of an optimal deterministic Stackelberg strategy as a single-level Integer Linear Program (ILP) that maximizes the leader's payoff. In addition, we show that the continuous relaxation of this ILP completely characterizes the game in mixed strategies: its optimal objective value equals the minimax value of the game, and applying the Birkhoff--von Neumann decomposition directly yields an optimal mixed Stackelberg strategy for Firm~1 as well as a mixed Nash equilibrium.

Building upon this relaxation, we develop a polynomial-time approximation algorithm for deterministic commitments. By decomposing the optimal fractional matching into a convex combination of perfect matchings and selecting the best matching in its support, we derandomize the solution. We prove this algorithm guarantees a deterministic payoff within an additive error of at most $n/2$ from the optimum, where $n$ is the number of riders. We complement this by constructing an integrality gap instance on a 1D Euclidean line, demonstrating that the limits of this linear programming relaxation perfectly match our derived approximation guarantee.

Next, we map the computational boundaries of the problem. We prove that deciding whether Firm~1 can achieve a strictly positive Stackelberg payoff is NP-complete via a reduction from \textsc{Max-Clique}, thereby ruling out any finite multiplicative approximation. Furthermore, we establish that a linear additive error is also structurally unavoidable: we utilize a reduction from \textsc{Max-3-Lin}(2) to prove that approximating the optimal Stackelberg payoff within an additive error of $\left(\frac{1}{24} - \delta\right)n$ is NP-hard for any constant $\delta > 0$.

Furthermore, we show that natural geometric structure restores tractability. We present an exact $O(n^3)$ dynamic programming algorithm for riders and robotaxis located on the Euclidean line by strategically partitioning the line at boundaries defined by the follower's robotaxis and leveraging linear-time convex bipartite matching.

Finally, we complement our theoretical results with an empirical evaluation using a real-world ride-hailing trip dataset. The results show that our LP-based rounding algorithm performs substantially better than its worst-case theoretical guarantee. In addition, recognizing the computational bottleneck of solving large linear programs in real time, we evaluate lightweight heuristics and demonstrate that they secure near-optimal Stackelberg payoffs while running orders of magnitude faster, providing practical alternatives for large-scale deployments.

\subsection{Related work}
Our work is situated at the intersection of algorithmic game theory and operations research, drawing upon foundational techniques for computing Stackelberg equilibria. The computational complexity of determining an optimal mixed strategy for a leader to commit to was formalized by Conitzer and Sandholm \cite{conitzer2006computing}, who introduced efficient linear programming formulations for normal-form games. This framework heavily influenced the domain of Stackelberg security games, where discrete resource allocation on bipartite graphs became a central focus \cite{kiekintveld2009computing, korzhyk2010complexity}. Notably, Korzhyk et al.\ \cite{korzhyk2010complexity} demonstrated that computing optimal randomized strategies for security resource allocation can be achieved in polynomial time via fractional linear programming relaxations and the Birkhoff--von Neumann decomposition, whereas computing optimal deterministic commitments is NP-hard. While these prior works leverage this decomposition pipeline to output mixed probability distributions, our approximation algorithm repurposes these exact mathematical tools to extract robust deterministic commitments over a spatial metric.

Furthermore, our integer linear programming formulation builds upon an established methodology in algorithmic game theory and bilevel optimization for capturing discrete follower responses. When a follower's best response involves solving a polynomial-time combinatorial optimization problem, the leader's bilevel problem can often be encoded into a single-level mathematical program by exploiting min-max structural dualities. This approach has been applied in Stackelberg network pricing games \cite{briest2010stackelberg}, where linear programming duality captures the follower's shortest-path decisions, as well as in Stackelberg packing games \cite{bohnlein2023stackelberg}. In our setting, we exploit the classic min-max duality of bipartite matching formalized by K\H{o}nig's theorem, dualizing the follower's maximum-cardinality poaching response into a minimum vertex cover to obtain an exact single-level ILP.

Our setting is also fundamentally connected to classical models of spatial competition, originating with Hotelling's seminal line duopoly \cite{hotelling1929stability}, and geometric facility location models such as Voronoi games \cite{ahn2004competitive, cheong2004one}. In these games, competing firms position facilities across a metric space to maximize their exclusively captured market share based on nearest-neighbor proximity. However, standard spatial location games assume uncapacitated facilities: any facility can serve an unbounded number of customers within its catchment area. In contrast, our fleet deployment imposes rigid unit-capacity constraints on every vehicle, layering perfect bipartite matching onto spatial distance preferences and giving rise to non-local combinatorial trade-offs absent from uncapacitated Voronoi competition.

Finally, our application domain connects to the rapidly expanding operations research literature on game-theoretic competition in ride-hailing networks. Studies in this domain have extensively investigated dynamic spatial pricing and network equilibria \cite{bimpikis2019spatial}, competitive pricing in ride-pooling markets \cite{ke2020pricing}, and multi-homing dynamics between competing platforms \cite{bryan2019theory}. With the advent of autonomous mobility, recent research has begun examining competitive mechanics specifically between centrally controlled robotaxi fleets \cite{siddiq2022ride}. However, these prior works typically analyze competition through macroscopic fluid approximations, stochastic queuing models, or continuous price and wage levers. By contrast, our work models platform competition at a discrete, microscopic algorithmic level, providing an exact mathematical framework for analyzing the direct spatial routing mechanics and fleet assignment duel between competing autonomous mobility platforms.

\subsection{Roadmap}
The rest of the paper is structured as follows. We present preliminary definitions in Section~\ref{sec:setting}. Our LP-based approximation algorithm for optimal Stackelberg payoff is presented in Section~\ref{sec:apx}. Our hardness results follow in Sections~\ref{sec:0vs1} and~\ref{sec:hardness-of-apx}. We present our dynamic programming algorithm for the problem on the Euclidean line in Section~\ref{sec:1D}. The experimental results follow in Section~\ref{sec:exp}. We conclude with open problems in Section~\ref{sec:open}. In Appendix~\ref{sec:mixed-stackelberg}, we show that the LP relaxation used by our approximation algorithm essentially computes the optimal Stackelberg strategy/payoff over randomized strategies.

\section{Preliminaries}\label{sec:setting}
We study the following two-player {\em robotaxi assignment} game. The setting consists of a set $X$ of $n$ riders and two sets $Y$ and $Z$ of $n$ robotaxis each, controlled by the two firms, who act as players. Both the riders and the robotaxis have fixed locations. The function $d:X\times (Y\cup Z)\rightarrow \R_{\geq 0}$ maps each rider-robotaxi pair to a non-negative distance.

The game is defined as follows. Firm 1 has all possible perfect matchings between the riders in $X$ and the robotaxis in $Y$ as strategies. Similarly, Firm 2 has all perfect matchings between the riders in $X$ and the robotaxis in $Z$ as strategies. For a matching $M$ between the riders and one of the two robotaxi sets, the notation $M(i)$ refers to the robotaxi (in $Y$ or $Z$) the rider $i\in X$ is matched to in $M$. Given a pair of strategies $M_1$ and $M_2$ by the two firms, rider $i\in X$ selects to be served by Firm 1 if $d(i,M_1(i))\leq d(i,M_2(i))$ and by Firm 2 otherwise.\footnote{As it is typical in the literature on Stackelberg strategies (e.g., see~\cite{conitzer2006computing,vonstengel2010leadership}), we assume that ties are broken in favor of the leader.} Given a pair of strategies by the firms, the payoff of each firm is the number of riders that select to be served by the firm.

We consider Stackelberg strategies~\cite{von1934marktform}, assuming that Firm 1 acts as the {\em leader} and Firm 2 acts as the {\em follower}. For a strategy $M$ selected by Firm 1, the corresponding {\em Stackelberg payoff} is the minimum payoff of Firm 1 when it selects the strategy $M$, over all possible strategies Firm 2 may select. An optimal Stackelberg strategy is one that maximizes the Stackelberg payoff.

To compute the Stackelberg payoff of a strategy $M$ for Firm 1, it is useful to study how many riders Firm 2 can ``steal'' from Firm 1. We define the {\em stealing graph} $G(M)$ as follows. $G(M)$ is a bipartite graph with the sets $X$ and $Z$ as its two vertex sets. For $i\in X$ and $\ell\in Z$, the edge $(i,\ell)$ exists in $G(M)$ if $d(i,\ell)<d(i,M(i))$. Notice that if Firm 2 selects as a strategy a matching that includes edge $(i,\ell)$, it steals rider $i$ in the sense that they will select to be served by Firm 2. Essentially, the stealing graph $G(M)$ contains those edges that correspond to ways (i.e., matches of riders to robotaxis in $Z$) Firm 2 can steal riders from Firm 1. A maximum-cardinality matching in $G(M)$ yields the maximum number of riders Firm 2 can steal. Clearly, such a matching can be trivially extended to a strategy for Firm 2, i.e., to a perfect matching between $X$ and $Z$, by arbitrarily pairing the remaining unmatched vertices. Given the matching $M$ Firm 1 uses as strategy, we denote the size of this maximum cardinality matching by $\nu(M)$. Then, the Stackelberg payoff of strategy $M$ is $n-\nu(M)$.

We now show how to compute the matching $M$ that minimizes $\nu(M)$ and, thus, yields the optimal Stackelberg payoff. By K\H{o}nig's theorem~\cite{konig1931}, the cardinality $\nu(M)$ of the maximum-cardinality matching in the stealing graph $G(M)$ is equal to the size of the minimum vertex cover in it. Furthermore, for $i\in X$ and $\ell \in Z$, let $D_{i,\ell}=\{j\in Y: d(i,\ell)<d(i,j)\}$. Then, the edge $(i,\ell)$ exists in the stealing graph $G(M)$ if and only if $M(i)\in D_{i,\ell}$. Using the above observations, we can express the problem of computing the optimal Stackelberg strategy and the corresponding payoff as the following integer linear program:
\begin{align}\tag{ILP 1}\label{eq:ilp}
\mbox{maximize\ \ } & n-\sum_{i\in X}{u_i}-\sum_{\ell \in Z}{v_\ell} &\\\label{eq:ilp-matching-constraint-1}
\mbox{subject to:\ \ } & \sum_{j\in Y}{x_{i,j}} = 1, &\forall i\in X\\\label{eq:ilp-matching-constraint-2}
& \sum_{i\in X}{x_{i,j}} = 1, &\forall j\in Y\\\label{eq:ilp-vertex-cover-constraint}
& u_i+v_\ell \geq \sum_{j\in D_{i,\ell}}{x_{i,j}}, &\forall i\in X, \ell \in Z\\\nonumber
& x_{i,j} \in \{0,1\}, &\forall i\in X, j\in Y\\\nonumber
& u_i\in \{0,1\}, &\forall i\in X\\\nonumber
& v_\ell \in \{0,1\}, &\forall \ell \in Z
\end{align}
The matrix $x$ represents a perfect matching $M$. For $i\in X$ and $j\in Y$, the variable $x_{i,j}$ denotes whether the edge $(i,j)$ appears in the matching $M$ ($x_{i,j}=1$) or not ($x_{i,j}=0$). Equations (\ref{eq:ilp-matching-constraint-1}) and (\ref{eq:ilp-matching-constraint-2}) are the perfect matching constraints requiring that each vertex $i\in X$ and $j\in Y$ should be matched in $M$, respectively. The vectors $u$ and $v$ represent the vertex cover in $G(M)$. The variable $u_i$ indicates whether vertex $i\in X$ belongs to the vertex cover ($u_i=1$) or not ($u_i=0$). The variable $v_\ell$ indicates whether vertex $\ell\in Z$ belongs to the vertex cover ($v_\ell=1$) or not ($v_\ell=0$). 

Due to the constraint (\ref{eq:ilp-matching-constraint-1}), the quantity $\sum_{j\in D_{i,\ell}}{x_{i,j}}$ is either $0$ or $1$. It is equal to $1$ if and only if rider $i\in X$ is matched to a robotaxi in $D_{i,\ell}$ in $M$ and, hence, the edge $(i,\ell)$ belongs to the stealing graph $G(M)$. In this case, the vertex cover should include either vertex $i\in X$ or vertex $\ell\in Z$; this is captured by constraint (\ref{eq:ilp-vertex-cover-constraint}). The last three constraints of the integer linear program \ref{eq:ilp} are the integrality constraints for the variables.

Before addressing the challenge of computing high-payoff deterministic Stackelberg strategies, we observe that the continuous relaxation of \ref{eq:ilp} (obtained by replacing the integrality constraints with non-negativity constraints) completely characterizes the robotaxi assignment game when players are permitted to randomize over matchings. In Appendix~\ref{sec:mixed-stackelberg}, we show for completeness that its optimal objective value $\mathrm{OPT}_{\mathrm{LP}}$ exactly equals the value of the zero-sum assignment game in mixed strategies. While computing commitment strategies in general-sum games requires solving multiple linear programs~\cite{conitzer2006computing}, in two-player zero-sum games a classic result establishes that the leader cannot obtain more than the minimax value by committing to a mixed strategy~\cite{korzhyk2011stackelberg, vonstengel2010leadership}. By linear programming duality, the minimax value and optimal commitment strategies can be computed efficiently via a single linear program~\cite{adler2013equivalence, dantzig1951equivalence}. Consequently, solving the continuous relaxation of \ref{eq:ilp} and applying the Birkhoff--von Neumann decomposition directly yields an optimal mixed Stackelberg strategy for Firm~1 and, equivalently, a mixed Nash equilibrium.

\section{An approximation algorithm}\label{sec:apx}
We begin the exposition of our technical results by focusing on computing a deterministic strategy for the leader with high Stackelberg payoff. We analyze an approximation algorithm that rounds the optimal fractional solution $(x^*, u^*, v^*)$ of the linear programming relaxation of \ref{eq:ilp} into a pure matching. The algorithm runs in three steps:
\begin{enumerate}
    \item Solve the linear programming relaxation of \ref{eq:ilp} and compute an optimal fractional solution $(x^*,u^*,v^*)$.
    \item Decompose the fractional matching $x^*$ into a convex combination of perfect matchings forming the collection of matchings $\mathcal{M}$.
    \item Return $\argmin_{M\in \mathcal{M}}{\nu(M)}$.
\end{enumerate}
We will prove the following statement.
\begin{theorem}\label{thm:stackelberg}
The algorithm runs in polynomial time and returns a matching $M^*$ that guarantees a Stackelberg payoff of
\begin{align}\label{eq:apx-guarantee}
    n-\nu(M^*) &\geq n\cdot \left(1-\sqrt{\frac{2\cdot (n-\opt)}{n+1}}\right)
\end{align}
for the leader, where $\opt$ denotes their optimal Stackelberg payoff.
\end{theorem}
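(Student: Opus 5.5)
The plan is to compare the returned matching against the LP optimum. Write $\varepsilon = n-\opt_{\mathrm{LP}}=\sum_i u^*_i+\sum_\ell v^*_\ell$. Since the LP is a relaxation of \ref{eq:ilp}, we have $\opt_{\mathrm{LP}}\geq \opt$ and hence $\varepsilon\le n-\opt$. It therefore suffices to prove $\nu(M^*)\le n\sqrt{2\varepsilon/(n+1)}$.

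Polynomial running time is routine. The LP has $O(n^2)$ variables and constraints. The Birkhoff--von Neumann decomposition of the doubly stochastic $x^*$ yields at most $n^2-2n+2$ perfect matchings with weights $\lambda_M$, and it is computable in polynomial time. Each $\nu(M)$ is a bipartite maximum matching computation.

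The core of the argument is a per-matching inequality that ties the stealing matching to the fractional cover. Order the robotaxis of $Y$ by distance from rider $i$. For a matching $M$ in the support $\mathcal{M}$, define the quantile
\[
p_i(M)=\sum_{j\in Y:\ d(i,j)\ge d(i,M(i))} x^*_{i,j}.
\]
If $(i,\ell)$ is an edge of $G(M)$, then $d(i,\ell)<d(i,M(i))$. So every $j$ with $d(i,j)\ge d(i,M(i))$ lies in $D_{i,\ell}$, and constraint (\ref{eq:ilp-vertex-cover-constraint}) gives $u^*_i+v^*_\ell\ge p_i(M)$.

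Now fix any maximum matching $N_M$ of $G(M)$ and sum this inequality over its edges. Because $N_M$ is a matching, each $u^*_i$ and $v^*_\ell$ is counted at most once, which gives
\[
\sum_{i \text{ matched in } N_M} p_i(M)\le \varepsilon .
\]
Thus every matching in the support has its stolen riders concentrated on small quantiles, with total quantile mass at most $\varepsilon$.

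Next I average over $M$ drawn with probability $\lambda_M$. For each rider $i$, $M(i)=j$ with probability $x^*_{i,j}$. Hence $p_i(M)$ is the upper-tail mass at the sampled point of a discrete distribution, and it stochastically dominates the uniform distribution on $(0,1]$: $\Pr[p_i(M)\le t]\le t$.

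Fix a threshold $t$ and split the stolen riders by whether $p_i(M)\le t$:
\[
\nu(M)\le \#\{i: p_i(M)\le t\}+\frac{\varepsilon}{t}.
\]
Taking expectations gives $\E[\nu(M)]\le nt+\varepsilon/t$. The best matching in $\mathcal{M}$ does at least as well as the average, so this already yields $\nu(M^*)=O(\sqrt{n\varepsilon})$, which has the form of (\ref{eq:apx-guarantee}).

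The main obstacle is recovering the exact constant $\sqrt{2n^2/(n+1)}$ rather than the $2\sqrt{n}$ from the crude threshold split. I would try replacing the threshold by the sharper fact that the $k$ stolen quantiles, summing to at most $\varepsilon$, must mostly sit among the smallest values. Integrating over thresholds, the bound should come out as $\E\bigl[\sum_{\text{stolen}} p_i\bigr]\ge \E[\nu]^2(n+1)/(2n^2)$. This would come from the quantile distributions: with ties and discreteness, $\E[p_i(M)]=(1+\sum_j (x^*_{i,j})^2)/2$, and the $n+1$ should appear from the discrete uniform on $\{1/(n+1),\dots\}$-type extremal cases. Convexity (Jensen) applied to $\E[\nu]$ would then give $\E[\nu]^2(n+1)/(2n^2)\le\varepsilon$, hence $\min_M\nu(M)\le n\sqrt{2\varepsilon/(n+1)}$. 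If this exact sharpening fails, the threshold argument still gives the same square-root shape with a slightly worse constant.
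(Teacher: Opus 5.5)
Your framework is sound, and it is essentially the paper's argument in primal form. Averaging your per-matching inequality over $M\sim P$ is exactly weak duality for the inner vertex-cover LP, applied to the paper's dual solution $f^*_{i,\ell}=\Pr_{M\sim P}[(i,\ell)\in F_M]$. Your bound $\sum_{j\in D_{i,\ell}}x^*_{i,j}\ge p_i(M)$ is the paper's truncation step.

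The gap is the final counting step. The threshold split proves only $\E[\nu(M)]\le 2\sqrt{n\varepsilon}$. The guarantee (\ref{eq:apx-guarantee}) requires $\nu(M^*)\le n\sqrt{2\varepsilon/(n+1)}$, which is smaller by a factor $\sqrt{2(n+1)/n}$. This is not a cosmetic constant: plugging $2\sqrt{n(n-\opt)}$ into the derivation after the theorem gives a worst-case additive error of $n$, which says nothing, instead of $n/2$.

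Your proposed sharpening is not yet an argument. The unconditional mean $\E[p_i(M)]$ is not the relevant quantity; what you must control is $\E[p_i(M)\cdot\one\{i\text{ matched in }N_M\}]$. Even the best use of your domination fact $\Pr[p_i(M)\le t]\le t$ only gives $\sum_i q_i^2/2\le\varepsilon$. Here you place the stolen mass $q_i=\Pr[i\text{ matched in }N_M]$ on the lowest quantiles. That yields $\E[\nu(M)]\le\sqrt{2n\varepsilon}$, still missing the factor $\sqrt{n/(n+1)}$.

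The missing idea is to lower-bound the tail mass by the stolen probabilities themselves, using that $x^*_{i,\cdot}$ has only $n$ atoms. Let $r^i_k$ be the $k$-th furthest robotaxi of $Y$ from $i$, and let $y_{i,k}=\Pr_{M\sim P}[M(i)=r^i_k\wedge i\text{ matched in }N_M]$. On the event $M(i)=r^i_k$ you have $p_i(M)\ge\sum_{j\le k}x^*_{i,r^i_j}\ge\sum_{j\le k}y_{i,j}$, because $y_{i,j}\le x^*_{i,r^i_j}$. Splitting the expectation of your per-matching inequality according to the value of $M(i)$ then gives
\begin{align*}
\varepsilon &\ge \sum_{i\in X}\sum_{k=1}^{n}y_{i,k}\sum_{j=1}^{k}y_{i,j}
=\frac12\sum_{i\in X}\Bigl(\Bigl(\sum_{k=1}^{n}y_{i,k}\Bigr)^2+\sum_{k=1}^{n}y_{i,k}^2\Bigr)\\
&\ge\frac{n+1}{2n}\sum_{i\in X}\Bigl(\sum_{k=1}^{n}y_{i,k}\Bigr)^2
\ge\frac{n+1}{2n^2}\,\E[\nu(M)]^2 .
\end{align*}
The identity is elementary algebra. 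The second inequality is Cauchy--Schwarz over the $n$ robotaxis of $Y$, namely $\sum_k y_{i,k}^2\ge(\sum_k y_{i,k})^2/n$; this is where the $n+1$ comes from. The last step is Cauchy--Schwarz over the riders, together with $\sum_i\sum_k y_{i,k}=\E[\nu(M)]$. This yields $\nu(M^*)\le\E[\nu(M)]\le n\sqrt{2\varepsilon/(n+1)}$, which is exactly the chain in the paper's proof of Lemma~\ref{lem:upper_bound}.
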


Proving that the algorithm runs in polynomial time is easy. Solving the linear programming relaxation can be clearly done in polynomial time; e.g., see~\cite{schrijver1986theory}. The fractional matching $x^*$ is essentially a doubly stochastic $n \times n$ matrix, which can be decomposed into $O(n^2)$ permutation matrices, using the Birkhoff--von Neumann theorem~\cite{birkhoff1946tres,von1953certain}. Each permutation matrix defines a perfect matching in $\mathcal{M}$. Finally, computing $\nu(M)$ for a given matching $M$ can be done by computing a maximum cardinality matching (e.g., see~\cite{cormen2009introduction}) in the stealing graph $G(M)$.

To prove the Stackelberg payoff guarantee, we will use a probabilistic argument. We will analyze the following randomized version of the algorithm.
\begin{enumerate}
    \item Solve the linear programming relaxation of \ref{eq:ilp} and compute an optimal fractional solution $(x^*,u^*,v^*)$.
    \item Decompose the fractional matching $x^*$ into a convex combination of perfect matchings forming the collection of matchings $\mathcal{M}$. 
    \item Use the coefficients of the convex combination to define a probability distribution $P$ over the matchings of $\mathcal{M}$ such that $\Pr_{M\sim P}[M(i)=j]=x^*_{i,j}$ for every $i\in X$ and $j\in Y$.
    \item Return a random matching from $\mathcal{M}$ drawn according to $P$.
\end{enumerate}
We will show that 
\begin{align}\label{eq:rand-guarantee}
    n-\E_{M\sim P}[\nu(M)] &\geq n\cdot \left(1-\sqrt{\frac{2\cdot (n-\opt_{\text{LP}})}{n+1}}\right),
\end{align}
where $\opt_{\text{LP}}$ denotes the optimal objective value of the linear programming relaxation of integer linear program \ref{eq:ilp}. The Stackelberg payoff guarantee in Theorem~\ref{thm:stackelberg} will then follow since $\nu(M^*)\leq \E_{M\sim P}[\nu(M)]$ and $\opt_{\text{LP}}\geq \opt$.

Our proof of Equation (\ref{eq:rand-guarantee}) and, consequently, of Theorem~\ref{thm:stackelberg} will follow immediately by the next lemma.

\begin{lemma} \label{lem:upper_bound}
The optimal solution $(x^*, u^*, v^*)$ of the linear programming relaxation  satisfies
\begin{align*}
    \opt_{\text{LP}} & \leq n - \frac{(n+1)}{2n^2}\cdot \E_{M\sim P}[\nu(M)]^2.
\end{align*}
\end{lemma}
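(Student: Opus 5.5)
We show that the LP value $W:=\sum_{i\in X}u^*_i+\sum_{\ell\in Z}v^*_\ell$, which equals $n-\opt_{\text{LP}}$, dominates $\frac{n+1}{2n^2}\E_{M\sim P}[\nu(M)]^2$. We can assume all $u^*_i,v^*_\ell\le 1$, since capping values at $1$ never violates constraint (\ref{eq:ilp-vertex-cover-constraint}) (its right-hand side is at most $1$), and the capped solution is still optimal. The key probabilistic fact is $\Pr_{M\sim P}[(i,\ell)\in G(M)]=\sum_{j\in D_{i,\ell}}x^*_{i,j}\le u^*_i+v^*_\ell$. We also use the nesting of the sets $D_{i,\ell}$: if $d(i,\ell)\le d(i,\ell')$ then $D_{i,\ell'}\subseteq D_{i,\ell}$. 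Hence, for any set $T\subseteq Z$, the event ``rider $i$ has some stealing edge into $T$'' is exactly the event $M(i)\in D_{i,\ell_T(i)}$, where $\ell_T(i)$ is the robotaxi in $T$ closest to $i$. This event has probability at most $u^*_i+\max_{\ell\in T}v^*_\ell$.

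\textbf{Step 1: a cover-based bound on $\E[\nu(M)]$.} Sort the follower's robotaxis so that $v^*_{(1)}\ge v^*_{(2)}\ge\dots\ge v^*_{(n)}$. For each $b\in\{0,\dots,n-1\}$, let $T_b$ be all robotaxis except the top $b$. Every matching in $G(M)$ uses at most $b$ edges into the top $b$ robotaxis. Every other matched edge goes to $T_b$ and has a distinct rider endpoint that has an edge into $T_b$. Taking expectations and using the fact above,
\[
\E[\nu(M)]\le b+\sum_{i\in X}\bigl(u^*_i+v^*_{(b+1)}\bigr)=b+U+n\,v^*_{(b+1)},
\]
where $U=\sum_i u^*_i$. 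Setting $m:=\min_{0\le b\le n-1}\bigl(b+n v^*_{(b+1)}\bigr)$, this gives $\E[\nu(M)]\le U+m$. Trivially $\E[\nu(M)]\le n$ as well.

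\textbf{Step 2: lower-bounding $\sum_\ell v^*_\ell$ in terms of $m$.} By the minimality of $m$, $v^*_{(k)}\ge (m-(k-1))/n$ for every $k$. Summing the positive terms gives
\[
\sum_\ell v^*_\ell\ge\sum_{k=1}^{\lfloor m\rfloor}\frac{m-k+1}{n}.
\]
This is essentially $m(m+1)/(2n)$. If $m$ is non-integral, a small extra computation is needed to check that the bound $\ge m(m+1)/(2n)$ still holds or loses only a negligible amount; this is the point I expect to need care. Thus $W\ge U+\frac{m(m+1)}{2n}$, with $0\le m\le n$.

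\textbf{Step 3: the inequality.} Write $E:=\E[\nu(M)]$. Since the target $\frac{n+1}{2n^2}E^2$ is increasing in $E$, it suffices to prove
\[
U+\frac{m(m+1)}{2n}\ge\frac{(n+1)(U+m)^2}{2n^2}\qquad\text{whenever } U+m\le n.
\]
(If $U+m>n$, replace $E$ by the bound $n$ and shrink $U$ or $m$ accordingly.) At $U=0$ the inequality reads $n(m+1)\ge(n+1)m$, i.e.\ $m\le n$, which holds. The difference RHS$-$LHS is a convex quadratic in $U$. So on the segment $U\in[0,n-m]$ it suffices to check the other endpoint $U=n-m$. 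There the RHS equals $\frac{n+1}{2}$, and the LHS equals $n-m+\frac{m(m+1)}{2n}$. The latter is minimized at $m=n$, where it equals exactly $\frac{n+1}{2}$. This yields $W\ge\frac{n+1}{2n^2}E^2$, i.e.\ the claimed bound $\opt_{\text{LP}}\le n-\frac{n+1}{2n^2}\E[\nu(M)]^2$. The main obstacle is Step 1: getting a cover-type bound on $\E[\nu]$ whose cost is controlled by the LP values. The nesting of the $D_{i,\ell}$ sets is what makes the per-rider union bound tight enough.
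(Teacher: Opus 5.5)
Your Steps~1 and~2 are sound, and they give a genuinely different, primal route from the paper. The paper instead exhibits the feasible solution $f^*_{i,\ell}=\Pr_{M\sim P}[(i,\ell)\in F_M]$ of the dual \ref{eq:lp-3}, orders Firm~1's robotaxis by distance from each rider, and applies Cauchy--Schwarz twice. The gap is in Step~3: the claim that $h(m):=n-m+\frac{m(m+1)}{2n}$ is minimized over $[0,n]$ at $m=n$ is false. Indeed $h'(m)=-1+\frac{2m+1}{2n}$ vanishes at $m=n-\frac12$. Moreover $h(n-1)=h(n)=\frac{n+1}{2}$, while $h(n-\frac12)=\frac{n+1}{2}-\frac{1}{8n}$. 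So at $U=\frac12$, $m=n-\frac12$ (where $U+m=n$), the inequality you reduce to reads $\frac{n+1}{2}-\frac{1}{8n}\ge\frac{n+1}{2}$, which fails. With the smooth bound $\sum_\ell v^*_\ell\ge\frac{m(m+1)}{2n}$ the argument does not close. The reduced inequality also holds with equality at $(U,m)=(0,n)$ and $(1,n-1)$, so you cannot afford the ``negligible loss'' you contemplate in Step~2 either. Your displayed sum stops at $\lfloor m\rfloor$ and so drops the positive term $k=\lfloor m\rfloor+1$, which makes matters worse.

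The repair is to keep Step~2 exact. The positive terms run over $k=1,\dots,\lceil m\rceil$, which gives $\sum_\ell v^*_\ell\ge\phi(m):=\frac1n\sum_{k=1}^{n}\max\{0,m-k+1\}$. This $\phi$ is piecewise linear with breakpoints at the integers and equals $\frac{q(q+1)}{2n}$ at each integer $q$; in particular $\phi(m)\ge\frac{m(m+1)}{2n}$, with excess $\frac{r(1-r)}{2n}$ where $r$ is the fractional part of $m$. Now redo Step~3 with $\phi$:
\begin{itemize}
\item The endpoint $U=0$ still works, since $\phi(m)\ge\frac{m(m+1)}{2n}$.
\item At $U=n-m$, the quantity $n-m+\phi(m)$ is linear between consecutive integers. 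At integer $m$ one has $n-m+\phi(m)-\frac{n+1}{2}=\frac{(n-m)(n-m-1)}{2n}\ge 0$, so the bound holds at every $m\in[0,n]$.
\item Convexity of RHS minus LHS in $U$ then covers all of $U\in[0,n-m]$.
\item The case $U+m>n$ follows by decreasing $U$ to $n-m$.
\end{itemize}
With this change your proof is complete. Compared with the paper's argument, it yields the extra structural bound $\E_{M\sim P}[\nu(M)]\le U+m$ via explicit threshold covers. The price is a tight piecewise endpoint analysis in place of two clean applications of Cauchy--Schwarz.
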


\begin{proof}
Let $(x^*, u^*, v^*)$ be the optimal solution of the linear programming relaxation of \ref{eq:ilp}. Consider the next linear program
\begin{align}\tag{LP 2}\label{eq:lp-2}
    \mbox{minimize\ \ } & \sum_{i \in X} u_i + \sum_{\ell \in Z} v_\ell \\\nonumber
    \text{subject to:\ \ } & u_i + v_\ell \geq \sum_{j\in D_{i,\ell}}{x^*_{i,j}}, &\forall i \in X, \ell \in Z \\\nonumber
    & u_i \geq 0, &\forall i \in X \\\nonumber
    & v_\ell \geq 0, &\forall \ell \in Z
\end{align}
which uses the entries of $x^*$ as constants. Clearly, the solution $(u^*, v^*)$ is an optimal solution to \ref{eq:lp-2}. We will show that its objective value is at least $\frac{(n+1)}{2n^2}\cdot \E_{M\sim P}[\nu(M)]^2$, implying that the optimal objective value $\opt_{\text{LP}}$ of the linear programming relaxation of \ref{eq:ilp} is as claimed in the statement of the lemma.

We do so by exploiting duality and proving that the dual linear program defined as
\begin{align} \tag{LP 3}\label{eq:lp-3}
    \text{maximize\ \ } & \sum_{i \in X} \sum_{\ell \in Z} f_{i,\ell} \cdot \sum_{j\in D_{i,\ell}}{x^*_{i,j}} \\\nonumber
    \text{subject to:\ \ } & \sum_{\ell \in Z} f_{i,\ell} \leq 1, &\forall i \in X \\\nonumber
    & \sum_{i \in X} f_{i,\ell} \leq 1, &\forall \ell \in Z \\\nonumber
    & f_{i,\ell} \geq 0, &\forall i \in X, \ell \in Z
\end{align}
has a feasible solution $f^*$ with objective value at least $\frac{(n+1)}{2n^2}\cdot \E_{M\sim P}[\nu(M)]^2$.

For each matching $M$ that Firm 1 can use as strategy, we denote by $F_M$ a
maximum-cardinality matching in the stealing graph $G(M)$ which represents a best response by Firm 2 and thereby defines the Stackelberg payoff of strategy $M$ for Firm 1.\footnote{If more than one such strategy exists for Firm 2, we select one arbitrarily.} For every $i \in X$ and $\ell \in Z$, set 
\begin{align*}
f^*_{i,\ell} &= \Pr_{M \sim P}\left[(i,\ell) \in F_M\right].
\end{align*}
Clearly, $f^*_{i,\ell} \geq 0$. Then, the quantities $\sum_{\ell \in Z} f^*_{i,\ell}$ and $\sum_{i \in X} f^*_{i,\ell}$ are simply the probabilities that rider $i \in X$ is stolen in $F_M$ by some robotaxi in $Z$, and that robotaxi $\ell \in Z$ steals some rider in $F_M$, respectively. Hence, $\sum_{\ell \in Z} f^*_{i,\ell} \leq 1$ and $\sum_{i \in X} f^*_{i,\ell} \leq 1$, implying that $f^*$ is a feasible solution for the dual~\ref{eq:lp-3}.

For rider $i \in X$ and integer $k \in [n]$, we denote by $r_k^i$ the $k$-th furthest robotaxi in $Y$ from rider $i$ (breaking distance ties arbitrarily). We now have that for every $i \in X$ and $\ell \in Z$, it holds that:
\begin{align}
    f_{i,\ell}^* \cdot \sum_{j\in D_{i,\ell}}{x^*_{i,j}} &= \Pr_{M \sim P}\left[(i,\ell) \in F_M\right] \cdot \sum_{j \in D_{i,\ell}}{x^*_{i,j}} \nonumber \\
    &= \sum_{k=1}^n \Pr_{M \sim P}\left[(i,\ell) \in F_M \wedge M(i) = r_k^i\right] \cdot \sum_{j=1}^{|D_{i,\ell}|}{x^*_{i, r_j^i}}. \label{eq:obvious_step}
\end{align}
Equation (\ref{eq:obvious_step}) holds because $\Pr_{M \sim P}[(i,\ell) \in F_M \wedge M(i) = r_k^i] = 0$ if $k > |D_{i,\ell}|$. Because the term $\Pr_{M \sim P}[(i,\ell) \in F_M \wedge M(i) = r_k^i]$ can be non-zero only if $k \leq |D_{i,\ell}|$, we can bound the inner sum by truncating it at the $k$-th term, i.e., 
\begin{align} \nonumber
    f^*_{i,\ell} \cdot \sum_{j\in D_{i,\ell}} x^*_{i, j} &\geq \sum_{k=1}^n{\Pr_{M \sim P}\left[(i,\ell) \in F_M \wedge M(i) = r_k^i\right]} \cdot \sum_{j=1}^{k}{x^*_{i, r_j^i}} \\\label{eq:f-star-obj-value-bound}
    &\geq \sum_{k=1}^n{\Pr_{M \sim P}\left[(i,\ell) \in F_M \wedge M(i) = r_k^i\right] \cdot \sum_{j=1}^{k}{\sum_{\ell' \in Z}{\Pr_{M \sim P}[(i,\ell') \in F_M \wedge M(i) = r_j^i]}}}
\end{align}
The second inequality in the above derivation follows since $x^*_{i, r_j^i}$ is the probability that rider $i$ is assigned to robotaxi $r_j^i \in Y$, which is clearly not lower than the probability that rider $i$ is assigned to robotaxi $r_j^i \in Y$ {\em and} is stolen by some robotaxi $\ell' \in Z$ in $F_M$.

For $i \in X$ and $k \in [n]$, abbreviate as $y_{i,k}$ the probability $\sum_{\ell' \in Z}{\Pr_{M \sim P}[(i,\ell') \in F_M \wedge M(i) = r_k^i]}$
that rider $i$ is assigned to robotaxi $r_k^i \in Y$ in $M$ and is stolen by some robotaxi $\ell' \in Z$ in $F_M$.

Then, using inequality (\ref{eq:f-star-obj-value-bound}), we have that the objective value of the feasible solution $f^*$ for \ref{eq:lp-3} is:
\begin{align}\nonumber
    &\sum_{i \in X}{\sum_{\ell \in Z}{f^*_{i,\ell} \cdot \sum_{j\in D_{i,\ell}}{x^*_{i,j}}}} \\\nonumber
    &\geq \sum_{i \in X} \sum_{\ell \in Z} \sum_{k=1}^n \Pr_{M \sim P}\left[(i,\ell) \in F_M \wedge M(i) = r_k^i\right] \cdot \sum_{j=1}^k \sum_{\ell' \in Z} \Pr_{M \sim P}\left[(i,\ell') \in F_M \wedge M(i) = r_j^i\right] \\\nonumber
    &= \sum_{i \in X} \sum_{k=1}^n \left( \sum_{\ell \in Z} \Pr_{M \sim P}\left[(i,\ell) \in F_M \wedge M(i) = r_k^i\right] \right) \cdot \sum_{j=1}^k \left( \sum_{\ell' \in Z} \Pr_{M \sim P}\left[(i,\ell') \in F_M \wedge M(i) = r_j^i\right] \right) \\\label{eq:dual-obj-value-bound-in-terms-of-ys}
    &= \sum_{i \in X} \sum_{k=1}^n y_{i,k} \sum_{j=1}^k y_{i,j},
\end{align}
where the first equality follows by exchanging the second and third sums, and the second one follows by the definition of $y_{i,k}$.

Notice that for every $i\in X$, we have
\begin{equation}\label{eq:ineq-on-ys}
    \sum_{k=1}^n y_{i,k} \sum_{j=1}^k y_{i,j} = \frac{1}{2} \cdot \left( \left( \sum_{k=1}^n y_{i,k} \right)^2 + \sum_{k=1}^n y_{i,k}^2 \right) \geq \frac{n+1}{2n} \cdot \left( \sum_{k=1}^n y_{i,k} \right)^2.
\end{equation}
The inequality follows by applying the Cauchy-Schwarz inequality, which yields that $\sum_{k=1}^n y_{i,k}^2 \geq \frac{1}{n}\cdot \left(\sum_{k=1}^n y_{i,k}\right)^2$. Hence, by combining inequalities (\ref{eq:dual-obj-value-bound-in-terms-of-ys}) and (\ref{eq:ineq-on-ys}), we obtain
\begin{align*}
    \sum_{i \in X}{\sum_{\ell \in Z}{f^*_{i,\ell} \cdot \sum_{j\in D_{i,\ell}}{x^*_{i,j}}}} &\geq \frac{n+1}{2n}\cdot \sum_{i \in X}{\left(\sum_{k=1}^n{y_{i,k}} \right)^2}\\
    &\geq \frac{n+1}{2n^2}\cdot \left(\sum_{i \in X}{\sum_{k=1}^n{y_{i,k}} }\right)^2\\
    &= \frac{n+1}{2n^2}\cdot \E_{M\sim P}[\nu(M)]^2,
\end{align*}
completing the proof. Again, the second inequality follows by applying the Cauchy-Schwarz inequality, which yields $\sum_{i \in X} \left(\sum_{k=1}^n y_{i,k}\right)^2 \geq \frac{1}{n}\cdot  \left(\sum_{i \in X} \sum_{k=1}^n y_{i,k}\right)^2$. The equality follows from the definitions of $\nu(M)$ and the $y_{i,k}$'s. 
\end{proof}

\subsection{The integrality gap of the linear programming relaxation}
We now show that the guarantee in terms of the optimal objective value of the linear programming relaxation of \ref{eq:ilp} cannot be improved.

\begin{lemma}
For every $k\in \{0,1,2, ..., n\}$, there exists a robotaxi assignment game with $n$ riders, two firms with $n$ robotaxis each, and objective value $\opt_{\text{LP}}(k)$ to the linear programming relaxation of \ref{eq:ilp}, in which every strategy $M$ of Firm 1 yields Stackelberg payoff satisfying
\begin{align*}
    n-\nu(M) &\leq k \leq n\cdot \left(1-\sqrt{\frac{2\cdot (n-\opt_{\text{LP}}(k))}{n+1}}\right)+\frac{1}{2}.
\end{align*}
\end{lemma}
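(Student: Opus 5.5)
We use a single cluster of co-located riders, with Firm~1's robotaxis at distinct distances and Firm~2 able to beat only Firm~1's farthest robotaxis. Fix $n$ and $k\in\{0,\dots,n\}$ and set $m=n-k$.

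\textbf{Construction.} Place all $n$ riders at the origin of the line. Place Firm~1's robotaxis at positions $1,2,\dots,n$; call the one at position $s$ "$s$". Place $m$ of Firm~2's robotaxis, $\ell_1,\dots,\ell_m$, with $\ell_t$ at position $n-t+\tfrac12$. Place the remaining $k$ robotaxis of Firm~2 at position $-2n$. Distances are $|\cdot|$ on the line.

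The sets $D_{i,\ell}$ are then easy to read off. For a far robotaxi $\ell$, $D_{i,\ell}=\emptyset$. For every rider $i$, $D_{i,\ell_t}$ consists of Firm~1's robotaxis at positions $s>n-t+\tfrac12$, i.e. $s\in\{n-t+1,\dots,n\}$, so $|D_{i,\ell_t}|=t$.

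\textbf{Integral bound.} Take any perfect matching $M$ of Firm~1. Rider $i$ with $M(i)=s$ is adjacent in $G(M)$ exactly to those $\ell_t$ with $t\ge n-s+1$, and the far robotaxis are isolated. Since $M$ is a bijection onto positions $1,\dots,n$, match the rider served by position $n-t+1$ to $\ell_t$ for $t=1,\dots,m$. This gives a stealing matching of size $m$. Only $m$ robotaxis of Firm~2 are non-isolated, so $\nu(M)=m$ exactly. Hence $n-\nu(M)=k$ for every strategy $M$, which gives the first inequality.

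\textbf{Fractional bound.} Take the uniform doubly stochastic matrix $x_{i,j}=1/n$. Then $\sum_{j\in D_{i,\ell_t}}x_{i,j}=t/n$. Set $u_i=0$ for all riders, $v_{\ell_t}=t/n$, and $v_\ell=0$ for the far robotaxis. All constraints (\ref{eq:ilp-vertex-cover-constraint}) hold. The cover cost is $\sum_{t=1}^m t/n=\frac{m(m+1)}{2n}$, so
\[
n-\opt_{\text{LP}}(k)\le \frac{m(m+1)}{2n}.
\]

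\textbf{Final inequality.} The right-hand inequality of the lemma is equivalent to
\[
\sqrt{\frac{2(n-\opt_{\text{LP}})}{n+1}}\le \frac{m+\tfrac12}{n},
\qquad\text{i.e.}\qquad
n-\opt_{\text{LP}}\le \frac{(n+1)(m+\tfrac12)^2}{2n^2}.
\]
By the fractional bound it suffices to show
\[
n\,m(m+1)\le (n+1)\bigl(m^2+m+\tfrac14\bigr).
\]
This holds trivially, because the right-hand side exceeds the left-hand side by $m^2+m+\tfrac{n+1}{4}\ge 0$.

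The step that needs care is the choice of geometry. A first attempt that lets Firm~2 beat Firm~1's near robotaxis (placing $\ell_t$ at $t-\tfrac12$) gives covers that are too expensive when $k$ is close to $n$. The fix is to make each $\ell_t$ beat only Firm~1's $t$ farthest robotaxis. Then uniform randomization leaves only a fraction $t/n$ of each stealing edge to cover, while every deterministic matching still exposes exactly $m$ disjoint stealable riders.
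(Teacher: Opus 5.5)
Your proof is correct and is essentially the paper's. Both place all riders at one point and make the $t$-th useful follower robotaxi beat exactly the $t$ farthest leader robotaxis; the paper does this with leader robotaxis at $1$ and at $3,5,\dots,2(n-k)+1$ and follower robotaxis at $2,4,\dots,2n$. Both use the uniform fractional matching with $u=0$ and $v$ linear in $t$, and both reduce to the inequality $n\,m(m+1)\le (n+1)(m+\tfrac12)^2$, which you verify explicitly where the paper only cites ``tedious calculations.'' The remaining differences are cosmetic: your specific geometry, parking the idle follower robotaxis on the opposite side, and showing the payoff is exactly $k$ rather than at most $k$.
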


\begin{proof}
Let $k\in \{0, 1, ..., n\}$ and consider the robotaxi assignment game with riders and robotaxis at a Euclidean line with $X=Y=Z=[n]$. All $n$ riders are located at point $0$. Firm 1 has $k$ robotaxis located at point $1$, and one robotaxi in each of the odd-coordinate locations $3$, $5$, ..., and $2n-2k+1$. For $\ell\in [n]$, Firm 2 has robotaxi $\ell$ at location $2\ell$. Clearly, for every matching $M$ of the robotaxis of Firm 1 to riders, Firm 2 can respond by assigning the rider who is assigned to robotaxi at location $2j+1$ by Firm 1 to its robotaxi at location $2j$ for $j=1, 2, ..., n-k$. In this way, Firm 2 steals $n-k$ riders from Firm 1, for a Stackelberg payoff of (at most) $k$.

Now consider the fractional solution to the linear programming relaxation of \ref{eq:ilp}. We set $x^*_{i,j}=1/n$ for every $i\in X$ and $j\in Y$, $u^*_i=0$ and $v^*_{\ell}=\frac{n-k-\ell+1}{n}$ for $\ell=1, 2, ..., n-k$ and $v^*_\ell=0$ for $\ell=n-k+1, ..., n$. Recall that Firm 1 has exactly $k+\ell-1$ robotaxis (specifically, the $k$ robotaxis at location 1 and the $\ell-1$ robotaxis at odd locations $3, 5, ..., 2\ell-1$) that are strictly closer to the riders at point 0 than the robotaxi $\ell$ of Firm 2 at location $2\ell$. Therefore, the robotaxi $\ell$ cannot steal riders assigned to these $k+\ell-1$ robotaxis, but it can steal riders assigned to the remaining $n-k-\ell+1$ robotaxis of Firm 1. Hence, $|D_{i,\ell}| = n-k-\ell+1$ and the vertex cover constraint becomes $u_i+v_\ell \geq \frac{n-k-\ell+1}{n}$, which is satisfied by the definition of the solution $(x^*,u^*,v^*)$ which is, therefore, feasible. The optimal objective value of the linear programming relaxation of \ref{eq:ilp} is then
\begin{align}\label{eq:counter-example-bound}
    \opt_{\text{LP}}(k) &\geq n-\sum_{i\in X}{u^*_i}-\sum_{\ell\in Z}{v^*_\ell} = n-\sum_{\ell=1}^{n-k}{\frac{n-k-\ell+1}{n}}=n-\frac{(n-k)(n-k+1)}{2n}.
\end{align}
Tedious calculations yield that
\begin{align*}
    n\, (n-k) (n-k+1) &\leq (n+1)\left(n-k+\frac{1}{2}\right)^2
\end{align*}
for any non-negative integers $n$ and $k$ with $k\leq n$. Hence, Equation~(\ref{eq:counter-example-bound}) yields
\begin{align*}
    \opt_{\text{LP}}(k) &\geq n-\frac{(n+1)\left(n-k+\frac{1}{2}\right)^2}{2n^2}.
\end{align*}
By rearranging, we obtain the inequality in the statement, completing the proof of the lemma.
\end{proof}

\subsection{Deriving an additive approximation guarantee}
Theorem~\ref{thm:stackelberg} provides a non-standard approximation guarantee. Ideally, we would like to get either a multiplicative guarantee $n-\nu(M^*) \geq \alpha(n) \cdot \opt$ for some function $\alpha(n)$ or an additive approximation guarantee $n-\nu(M^*) \geq \opt-\beta(n)$ for some function $\beta(n)$. Notice that for instances with $\opt \leq \frac{n-1}{2}$ the RHS of inequality (\ref{eq:apx-guarantee}) is non-positive. Hence, our LP-based algorithm presented above achieves no finite multiplicative approximation guarantee.

However, inequality (\ref{eq:apx-guarantee}) provides an approximation guarantee with an additive error at most $n/2$ (or an $n/2$-additive approximation guarantee) via the following derivation:
\begin{align*}
n - \nu(M^*) &\geq n-\sqrt{\frac{2n^2(n-\opt)}{n+1}}\\
&= \opt +\left(\sqrt{n-\opt}-\sqrt{\frac{n^2}{2(n+1)}}\right)^2-\frac{n^2}{2(n+1)} \geq \opt-\frac{n}{2}.
\end{align*}
We devote the next two sections to showing that this is not far from best possible. 

\section{Deciding whether positive Stackelberg payoff is possible is hard}\label{sec:0vs1}
In this section, we show that there is no finite multiplicative guarantee for the problem of computing the optimal Stackelberg payoff.

\begin{theorem}
Given a robotaxi assignment game, it is NP-hard to decide whether there exists a deterministic strategy for Firm 1 that yields a strictly positive Stackelberg payoff. 
\end{theorem}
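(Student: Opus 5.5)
The plan is to reduce from \textsc{Clique}: given a graph $H=(V,E)$ and an integer $k\geq 3$, decide whether $H$ has a clique on $k$ vertices. Firm~1 gets a strictly positive Stackelberg payoff with strategy $M$ exactly when $\nu(M)<n$, i.e., when the stealing graph $G(M)$ has no perfect matching. By Hall's theorem this means some set $A'$ of riders has fewer than $|A'|$ neighbours in $Z$. So the leader's goal becomes: use its ``good'' robotaxis to protect a set of riders whose stealing neighbourhood is deficient. Since $d$ is an arbitrary non-negative function, I will define distances freely.

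\textbf{Construction.} Write $g=\binom{k}{2}$ and $t=g-k-1\geq 0$ (the case $k=3$ is handled directly, or we assume $k\geq 4$).
\begin{itemize}
\item Riders: one \emph{edge rider} $r_e$ for each $e\in E$, plus enough \emph{filler riders} to bring the total to some $n\geq |E|+|V|+t$.
\item Firm~1 (set $Y$): $g$ \emph{good} robotaxis at distance $2$ from every rider, and $n-g$ \emph{bad} robotaxis at distance $4$ from every rider.
\item Firm~2 (set $Z$): a \emph{vertex robotaxi} $z_a$ for each $a\in V$, at distance $1$ from $r_e$ when $a\in e$ and at distance $3$ from every other rider (filler riders included); $t$ \emph{universal} robotaxis at distance $1.5$ from all riders; and $n-|V|-t$ \emph{dummy} robotaxis at distance $3$ from all riders.
\end{itemize}
With these distances, a rider on a bad robotaxi is adjacent in $G(M)$ to every vertex of $Z$. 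A rider on a good robotaxi is adjacent only to the universal robotaxis, and additionally (if it is an edge rider $r_e$) to the endpoint robotaxis $z_a$, $a\in e$.

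\textbf{Hall analysis.} Fix $M$ and let $A$ be the set of riders on good robotaxis, so $|A|=g$. Any set containing a rider on a bad robotaxi has all $n$ vertices of $Z$ as neighbours, so it cannot violate Hall's condition. Filler riders in $A$ only see the $t$ universal robotaxis, and that is even less than an edge rider sees. So for the purposes of the argument, w.l.o.g.\ $A$ consists of edge riders; I must double-check that fillers in $A$ cannot create a deficiency, and if they can, give them extra private neighbours to rule it out. For $A'\subseteq A$ corresponding to an edge set $F\subseteq E$ spanning a vertex set $V(F)$, we have $|N(A')|=|V(F)|+t$. Hence a deficiency exists iff $|F|-|V(F)|\geq t+1=g-k$ for some $F$ with $|F|\leq g$.

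\textbf{Key counting step.} If $|F|=e$ and $|V(F)|=v$, then $e\leq\min\{g,\binom{v}{2}\}$, and the condition $e-v\geq g-k$ forces $v\leq e-g+k\leq k$. Since $\binom{v}{2}-v$ is increasing for $v\geq 2$, we get $e-v\leq\binom{v}{2}-v\leq g-k$. Equality throughout forces $v=k$ and $e=\binom{k}{2}$, i.e., $F$ is a $k$-clique. Conversely, if $H$ has a $k$-clique, the leader assigns the good robotaxis to its $g$ edge riders. Then $A'=A$ has $k+t=g-1$ neighbours, so $\nu(M)\leq n-1$ and the payoff is positive. Together these give: positive payoff $\iff$ $H$ has a $k$-clique.

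\textbf{Main obstacle.} The delicate part is bookkeeping. The set sizes must be consistent (hence the filler riders and dummy robotaxis), and no unintended deficiencies may arise from fillers or dummies. I expect to verify these by checking all mixed sets $A'$ in Hall's condition. Membership in NP is immediate: guess $M$ and compute $\nu(M)$ by maximum matching, giving NP-completeness.
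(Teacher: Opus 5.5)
\textbf{There is a genuine gap, and it is exactly the step you deferred.} With your distances, a filler rider on a good robotaxi (distance $2$) can be stolen only by the $t$ universal robotaxis, because vertex and dummy robotaxis are at distance $3$ from fillers. You have at least $|V|+t\ge t+1$ fillers, and $t+1=\binom{k}{2}-k\le g$. So the leader can put $t+1$ good robotaxis on fillers. Those $t+1$ riders have only $t$ neighbours in $G(M)$, Hall's condition fails, and $\nu(M)\le n-1$. Every graph $H$ is therefore mapped to a yes-instance, and the reduction is invalid as stated.

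Your remark that fillers ``see even less than an edge rider, so w.l.o.g.\ $A$ consists of edge riders'' has the monotonicity reversed. A protected rider with fewer stealing neighbours is \emph{more} useful to the leader, not less. Your fallback of one private stealer per filler would restore Hall's condition, but it costs something. Each filler then brings its own Firm~2 robotaxi, so fillers no longer help balance $|Z|=n$: the dummy count becomes $|E|-|V|-t$, which can be negative.

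\textbf{The clean repair is to make fillers unprotectable.} Put the good robotaxis at distance $4$ from every filler rider, keeping them at distance $2$ from edge riders. Then every filler is adjacent to all of $Z$ in every $G(M)$, so the only sets that can violate Hall's condition are sets of protected edge riders. This plays the same role as the paper's dummy riders, which are at distance $3$ from all of Firm~1's fleet.

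With this change the rest of your argument goes through verbatim. The deficiency condition is $|F|-|V(F)|\ge\binom{k}{2}-k$ with $F\neq\emptyset$. Combined with the monotonicity of $\binom{v}{2}-v=v(v-3)/2$, it correctly characterizes $k$-cliques, and the converse direction gives a deficiency of exactly one.

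\textbf{Comparison with the paper.} Once repaired, your reduction is valid and genuinely different from the paper's.
\begin{itemize}
\item You make the \emph{protected} riders the Hall violator: the $\binom{k}{2}$ edge riders of a protected clique see only $k+t=\binom{k}{2}-1$ stealers.
\item The paper makes the $\binom{c}{2}$ \emph{unprotected} edge riders the violator. It neutralizes the protected ones with $m-\binom{c}{2}$ protected-stealer robotaxis, and balances counts with $\frac{c}{2}-1$ copies of each vertex robotaxi plus dummy riders. This is why it needs $c$ even and a case analysis on the number of spanned vertices.
\end{itemize}
Your version is leaner: it uses one robotaxi per vertex and works for any $k\ge 4$.
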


\begin{proof}
We prove the theorem by presenting a polynomial-time reduction from the \textsc{Maximum Clique} problem, which asks whether a given graph contains a clique of a certain size $c$. This problem is well-known to be NP-hard~\cite{karp1972reducibility} and remains so even when restricted to even integers $c \geq 4$. 

Let $H = (V, E)$ be an instance of \textsc{Maximum Clique}, and let $c \geq 4$ be an even integer. Let $N = |V|$, $m = |E|$, and $k=\binom{c}{2}$. Without loss of generality, we may assume that $N\geq 6$, $N \geq c + 2$, and $m\geq k$ (otherwise, if $N\leq 5$ or $c\geq N-1$ or $m<k$, the \textsc{Maximum Clique} instance can be trivially decided in polynomial time). Given $H$ and $c$, we construct a robotaxi assignment game as follows. Let
\[
D = N\left(\frac{c}{2} - 1\right) + \frac{c}{2} - 1 - k.
\]
By our assumptions on $N$ and $c$, we have $D > 0$. The number of riders and fleet size for both firms is $n = m + D$. The game has a set $X_E$ of {\em edge riders}, with one rider for each edge $e\in E$ and a set $X_D$ containing $D$ {\em dummy riders}. Firm 1's fleet has a set $Y_G$ of $m-k$ {\em good robotaxis} and a set $Y_B$ of $k+D$ {\em bad robotaxis}. Firm 2's fleet consists of three sets $Z_1$, $Z_2$, and $Z_3$ of robotaxis. Set $Z_1$ has $\frac{c}{2}-1$ {\em vertex robotaxis} for each vertex $v\in V$. Set $Z_2$ has $m-k$ {\em protected-stealer robotaxis} while set $Z_3$ has $\frac{c}{2}-1$ {\em auxiliary robotaxis}. The distances between riders and robotaxis of Firm 1 are as follows:
\begin{itemize}
    \item The distance between an edge rider $x_e \in X_E$ and a good robotaxi $y_g \in Y_G$ is $2$.
    \item For all other pairs (edge riders to bad robotaxis, and dummy riders to any robotaxi), the distance is $3$.
\end{itemize}
Now, consider a perfect matching among riders and robotaxis of Firm 1. Since dummy riders are at distance $3$ from all Firm 1 robotaxis, the distance to their assigned Firm 1 robotaxi will always be at least $3$. Firm 1 will assign its $m-k$ good robotaxis to $m-k$ edge riders, giving them the lowest possible distance of $2$ to their assigned Firm 1 robotaxi. The remaining $k$ edge riders receive bad robotaxis (for a distance of $3$); let $S \subseteq X_E$ denote the set of these $k$ edge riders.

The distances between riders and robotaxis of Firm 2 are as follows:\footnote{We remark that all distances between riders and robotaxis belong to $\{1,2,3\}$. Since the graph between riders and robotaxis is bipartite, these distances are consistent with a shortest-path metric on this graph. Of course, the distances among riders or among robotaxis do not affect our arguments in any way.}
\begin{itemize}
    \item The distance between a vertex robotaxi $z\in Z_1$ and an edge rider $x_e$ corresponding to edge $e = (u,v)$ is $2$ if $z$ is associated with vertex $u$ or $v$; otherwise, the distance is $3$. The distance between a vertex robotaxi $z\in Z_1$ and a dummy rider $x_d \in X_D$ is $1$.
    \item The distance between a protected-stealer robotaxi $z\in Z_2$ and any edge rider $x_e\in X_E$ is $1$. The distance between a protected-stealer robotaxi $z\in Z_2$ and any dummy rider $x_d\in X_D$ is $3$.
    \item The distance between an auxiliary robotaxi $z\in Z_3$ and any edge rider $x_e\in X_E$ is $2$. The distance between an auxiliary robotaxi $z\in Z_3$ and any dummy rider $x_d\in X_D$ is $2$. 
\end{itemize}

Given a perfect matching $M$ selected by Firm 1, we define the stealing graph $G(M)$ by considering the edges between riders and robotaxis of Firm 2 that have distance strictly smaller than the distance to the assigned Firm 1 robotaxi. So, the graph $G(M)$ has:
\begin{itemize}
    \item All edges between protected-stealer robotaxis in $Z_2$ and any edge rider in $X_E$.
    \item The edges between the vertex and auxiliary robotaxis in $Z_1\cup Z_3$ and the $k$ vulnerable edge riders in $S$ and the edges between the vertex and auxiliary robotaxis in $Z_1\cup Z_3$ and the dummy riders in $X_D$.
\end{itemize}
Notice that $|Z_1\cup Z_3| = k + D$. To find a maximum-cardinality matching in the stealing graph $G(M)$, Firm 2 will optimally match $Z_2$ to $m-k$ protected edge riders, leaving the $k$ vulnerable edge riders in $S$ and the dummy riders in $X_D$ available to be stolen by $Z_1\cup Z_3$. The total number of stolen riders is thus $(m - k) + \mu(S \cup X_D)$, where $\mu(S\cup X_D)$ denotes the maximum number of matches between the edge riders in $S$ and the dummy riders in $X_D$, and the robotaxis in $Z_1\cup Z_3$. Hence, Firm 1 maximizes its Stackelberg payoff by choosing the set $S$ that minimizes $\mu(S \cup X_D)$.

We are done with the description of the reduction and ready to prove its correctness. First, suppose $H$ contains a clique of size $c$. Firm 1 chooses a matching so that the vulnerable set $S$ consists of the edge riders corresponding to the $k = \binom{c}{2}$ edges of $H$ in this clique. 
Consequently, the active neighborhood of $S$ in $Z_1\cup Z_3$ consists only of the vertex robotaxis in $Z_1$ corresponding to the $c$ vertices forming the clique in $H$, plus all robotaxis in $Z_3$. The size of this neighborhood is:
\begin{align}\label{eq:deficiency-of-S}
|N_{Z_1\cup Z_3}(S)| = c\left(\frac{c}{2} - 1\right) + \left(\frac{c}{2} - 1\right) = k - 1<|S|.
\end{align}
Firm 2 is attempting to match the $k+D$ riders in $S \cup X_D$ into $Z_1\cup Z_3$, which also has size $k+D$. By Equation~(\ref{eq:deficiency-of-S}), Hall's matching theorem~\cite{hall1935representatives} implies that at most $|Z_1\cup Z_3| - 1 = k+D - 1$ riders can be matched. Hence, at least one rider will survive and the Stackelberg payoff of Firm 1 will be strictly positive.

Next, suppose $H$ does not contain a clique of size $c$. Firm 1 is forced to choose \textit{some} $k$-sized subset $S$ of edge riders to be vulnerable. We claim that Firm 2 can completely match $S \cup X_D$ into $Z_1\cup Z_3$, implying that it can steal all riders from Firm 1. Let $A \subseteq S \cup X_D$ be an arbitrary non-empty subset of vulnerable riders. By Hall's matching theorem~\cite{hall1935representatives}, it suffices to show that $|N_{Z_1\cup Z_3}(A)| \geq |A|$. We do so by distinguishing between a few cases.

    \paragraph{Case 1: $A \cap X_D \neq \emptyset$.} Since every dummy rider connects to all robotaxis in $Z_1\cup Z_3$, we have $N_{Z_1\cup Z_3}(A) = Z_1\cup Z_3$ and, thus, $|N_{Z_1\cup Z_3}(A)|=k+D=|S\cup X_D|\geq |A|$, as desired.
    
    \paragraph{Case 2: $A \subseteq S$.} Then $A$ is a set of edge riders. Let $y \geq 1$ be the number of vertices spanned by the edges in $A$. The neighborhood of $A$ in $Z_1\cup Z_3$ consists of the robotaxis in $Z_1$ corresponding to these $y$ vertices, plus all the auxiliary robotaxis in $Z_3$. Its size is:
    \[
    |N_{Z_1\cup Z_3}(A)| = y\left(\frac{c}{2} - 1\right) + \frac{c}{2} - 1 = (y+1)\left(\frac{c}{2} - 1\right)
    \]
    We verify the condition $|N_{Z_1\cup Z_3}(A)|\geq |A|$ for all possible values of $y$:
    \medskip
    
    \noindent
    \textit{Subcase 2.1: $y \leq c-1$.} We have
        \begin{align*}
            |N_{Z_1\cup Z_3}(A)| &= (y+1)\left(\frac{c}{2}-1\right) > \binom{y}{2}\geq |A|,
        \end{align*}
        as desired. The first inequality is equivalent to $y(y-c+1)-c+2<0$ which clearly holds for $0\leq y\leq c-1$ and $c\geq 4$. The second inequality follows since the maximum number of edges among $y$ vertices is $\binom{y}{2}$.   
    \medskip
    
    \noindent
    \textit{Subcase 2.2: $y = c$.} We have 
        \begin{align*}
            |N_{Z_1\cup Z_3}(A)| &= (c+1)\left(\frac{c}{2}-1\right) =\binom{c}{2}-1\geq |A|,
        \end{align*}
        as desired. The inequality follows since the graph $H$ contains no clique of size $c$.
        \medskip
    
    \noindent
    \textit{Subcase 2.3: $y \geq c+1$.} We have
        \begin{align*}
            |N_{Z_1\cup Z_3}(A)| &\geq (c+2)\left(\frac{c}{2}-1\right) \geq\binom{c}{2}=k\geq |A|,
        \end{align*}
        as desired. The second inequality follows by the fact $c\geq 4$ and the third one by the assumption $A\subseteq S$ and the fact $|S|=k$.
\end{proof}

\section{Hardness of approximating the optimal Stackelberg payoff}\label{sec:hardness-of-apx}
Recall that our LP-based algorithm computes a deterministic strategy that approximates the optimal Stackelberg payoff with a modest additive term of $n/2$. We now show that such an $\Omega(n)$ additive error in the approximation is necessary.

\begin{theorem}\label{thm:n_over_24}
For every constant $\delta \in \left(0, \frac{1}{24}\right)$, it is NP-hard to approximate\footnote{In the statement of Theorem~\ref{thm:n_over_24}, we consider specifically algorithms that return the Stackelberg payoff that is associated with a feasible strategy of Firm 1. Hence, they always yield a payoff that can only be lower than the optimal Stackelberg payoff, incurring a one-sided error.} the optimal Stackelberg payoff of an $n$-robotaxi assignment game within an additive error of $\left(\frac{1}{24} - \delta\right) n$.
\end{theorem}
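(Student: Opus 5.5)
I will reduce from \textsc{Max-3-Lin}(2) in Håstad's gap form: for every $\epsilon>0$, given a system of $m$ equations $x_a\oplus x_b\oplus x_c=\beta$ over $N$ Boolean variables, it is NP-hard to distinguish instances where a $(1-\epsilon)$ fraction of the equations is satisfiable from instances where at most a $(\frac12+\epsilon)$ fraction is. I may also assume each variable occurs in a bounded number of equations (a standard regular variant), so that $N=\Theta(m)$ and the final count of riders is linear in $m$.

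\textbf{The gadgets.} As in the proof of the \textsc{Maximum Clique} reduction, I will only use distances in $\{1,2,3\}$, so rider preferences are governed by which Firm~1 robotaxi a rider receives.

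\emph{Variable gadgets.} For each variable $x_a$, I will build a small group of riders and Firm~1 robotaxis such that every perfect matching, up to payoff-irrelevant swaps, falls into one of two ``canonical'' classes, encoding $x_a=0$ and $x_a=1$. Each class leaves a designated set of riders vulnerable (at distance $3$ from their Firm~1 robotaxi) and protects the complementary set (distance $2$).

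\emph{Equation gadgets.} For each equation I will add a small set of Firm~2 robotaxis at distance $2$ (or $1$) from the literal riders of its three variables. Among the $8$ assignments to the three variables, the gadget should let Firm~2 steal exactly one rider fewer when the equation is satisfied than when it is violated.

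\emph{Balancing.} As in the clique reduction, I will use dummy riders together with ``protected-stealer'' robotaxis, at distance $1$ from all edge riders, so that the baseline stealing is the same constant for every matching. The leader's payoff is then $\mathrm{const}+\#\{\text{satisfied equations}\}$, up to a loss from non-canonical matchings.

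\textbf{Analysis.} Completeness is direct: the leader plays the canonical matching of a $(1-\epsilon)$-satisfying assignment. I will compute $\nu(M)$ by exhibiting a vertex cover of $G(M)$ and invoking K\H{o}nig's theorem, as in Section~\ref{sec:setting}; this gives payoff at least $\alpha m+(1-\epsilon)m$.

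Soundness requires showing that every matching $M$, canonical or not, has payoff at most $\alpha m+(\frac12+\epsilon)m$. The approach is to exhibit a large matching in $G(M)$ via Hall's theorem, mirroring the case analysis of the previous proof. I will decode an assignment from $M$ by majority within each variable gadget. The gadget must then be designed so that each ``inconsistent'' gadget costs the leader at least as much as it could gain in the equations that variable touches. This local-to-global decoding is the main obstacle, and I expect to tune the gadget multiplicities (copies per variable, with the bounded occurrences) to make it hold.

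Finally, the satisfiable and unsatisfiable cases differ in payoff by $(\frac12-2\epsilon)m$, while $n=Cm$ for the gadget-size constant $C$. If an algorithm approximated the optimal payoff within additive error $(\frac{1}{24}-\delta)n$, it would separate the two cases, provided the gap exceeds twice that error. Since the algorithm's error is one-sided, it suffices that the gap exceeds the error itself. This forces $C\approx 12$, i.e.\ about $12$ riders per equation, and I will choose the gadget sizes to meet that count, taking $\epsilon$ small in terms of $\delta$.
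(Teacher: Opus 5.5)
Your proposal is a plan rather than a proof. Neither the variable gadget nor the equation gadget is ever specified. The step you yourself flag as the main obstacle---that non-canonical matchings cannot beat the best canonical one---is left to later ``tuning.'' That step is the whole soundness argument, so as written nothing is established.

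The paper closes it with a concrete construction plus a structural lemma that makes majority decoding and inconsistency penalties unnecessary.
For each variable $v$ there are two rider groups $X_{v,0},X_{v,1}$ of size $2d_v+2$, where $d_v$ is the number of occurrences of $v$. There is also a set $Y_v$ of $2d_v+2$ good robotaxis at distance $1$ from both groups; every other Firm~1 robotaxi is at distance $3$. For each equation $e=(u,v,w,b_e)$, Firm~2 gets four stealers $Z_{e,a,b,c}$, one per satisfying tuple, at distance $2$ from exactly $X_{u,a}\cup X_{v,b}\cup X_{w,c}$. Firm~2's fleet is padded with robotaxis at distance $4$ that never steal.
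An exchange step lets you assume the robotaxis of $Y_v$ serve riders of $X_{v,0}\cup X_{v,1}$. Then the vulnerable riders of $X_{v,a}$ form a complete bipartite graph with their stealer neighbourhood $I_{v,a}$. K\H{o}nig gives $\nu(M)=\min_S\bigl(|S|+\sum_{v,a}\mu_{v,a}\one\{I_{v,a}\not\subseteq S\}\bigr)$, with $\mu_{v,a}$ the number of vulnerable riders in $X_{v,a}$. For fixed $S$ this is linear in $\mu$ subject to $\mu_{v,0}+\mu_{v,1}=2d_v+2$. Exchanging the two minimizations therefore shows that an extreme, i.e.\ canonical, configuration is optimal.
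For canonical $M$, since $|X_{v,a}|>|I_{v,a}|=2d_v$, some minimum vertex cover contains only stealers. So $\nu(M)$ is the number of non-isolated stealers, namely $4m-\sat(M)$: a satisfied equation isolates exactly the stealer indexed by the assignment's own tuple.

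Your counting also needs repair. Balancing with dummy riders and distance-$1$ protected stealers, borrowed from the clique reduction, only inflates $n$ without enlarging the gap, which works against the constant $\frac{1}{24}$. Here the payoff is already $\text{const}+\sat(M)$ with no balancing at all.
The figure of $12$ riders per equation arises from giving each literal group $X_{v,a}$ about $2d_v$ riders, i.e.\ $4$ riders per variable occurrence, so that it is at least as large as its $2d_v$ adjacent stealers. The leftover per-variable overhead (here $4$ riders per variable) must then be made negligible. The paper does this by blowing each variable up into $k$ copies, giving $k^3m$ equations versus $kN$ variables while $\opt$ scales exactly by $k^3$.
Finally, drop the bounded-occurrence assumption. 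It is unnecessary, since $N\le 3m$ anyway. Moreover, for any fixed occurrence bound, H\aa stad's $(1-\epsilon)$ versus $(\frac12+\epsilon)$ gap is not available for arbitrarily small $\epsilon$, because bounded-occurrence CSPs admit better-than-random approximation algorithms.
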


We define a polynomial-time reduction from problem~\textsc{Max-3-Lin(2)}. An instance $H=(V,E)$ of \textsc{Max-3-Lin(2)} consists of a set of $N$ Boolean variables $V=\{x_1, x_2, ..., x_N\}$ and a set $E$ of $m$ clauses $E=\{C_1, ..., C_m\}$. Each clause $e=(u,v,w, b_e) \in E$ represents the linear equation modulo $2$
\[
u \oplus v \oplus w = b_e \pmod 2,
\]
where $u, v,$ and $w$ are three distinct variables from $V$ and $b_e \in \{0, 1\}$. We denote by $\opt(H)$ the maximum number of clauses in $E$ that can be simultaneously satisfied (by some Boolean assignment to the variables of $V$).

We will use the following hardness of approximation result.

\begin{theorem}[H\aa stad~\cite{hastad2001some}]\label{thm:hastad}
For any constant $\varepsilon \in \left(0,1/4\right)$, given a \textsc{Max-3-Lin(2)} instance $H$ with $m$ clauses, it is \textsf{NP}-hard to decide whether $\opt(H)$ is at least $(1-\varepsilon)m$ or at most $\left(\frac{1}{2} + \varepsilon\right)m$.
\end{theorem}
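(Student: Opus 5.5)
This statement is H\aa stad's theorem, which the paper imports as a black box. The proof I would follow is H\aa stad's original PCP-based argument, outlined below.

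\emph{Outer verifier.} I would start from a \textsc{Label Cover} instance with constant label sizes. By the PCP theorem combined with Raz's parallel repetition theorem, for every $\eta>0$ it is NP-hard to distinguish instances that are fully satisfiable from those in which at most an $\eta$ fraction of constraints can be satisfied. Concretely, each constraint is a projection $\pi:[L]\to[R]$ between a ``big'' vertex $W$ with labels in $[L]$ and a ``small'' vertex $U$ with labels in $[R]$.

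\emph{Encoding and test.} The reduction would be:
\begin{enumerate}[(i)]
\item Replace each vertex by the long code of its label. The long code is a table $A_U:\{\pm1\}^{R}\to\{\pm1\}$ for $U$, and similarly $B_W:\{\pm1\}^{L}\to\{\pm1\}$ for $W$.
\item Fold each table over negation, so that $A(-f)=-A(f)$. Folding is exactly what forces nonzero right-hand sides $b_e$ to appear in the equations.
\item Run the following test on a random constraint $(U,W,\pi)$:
\begin{enumerate}[a.]
\item pick $f$ uniformly on $\{\pm1\}^R$ and $g$ uniformly on $\{\pm1\}^L$;
\item let $\mu\in\{\pm1\}^L$ be $\varepsilon$-biased noise, with each coordinate equal to $-1$ with probability $\varepsilon$, and set $h=g\cdot (f\circ\pi)\cdot \mu$;
\item accept iff $A_U(f)B_W(g)B_W(h)=1$.
\end{enumerate}
\end{enumerate}
Each test is a linear equation mod 2 in three table entries, so the weighted collection of tests is a \textsc{Max-3-Lin(2)} instance. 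The weights can be made uniform by duplicating clauses.

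\emph{Completeness.} If the label cover is satisfiable, take the tables to be the correct long codes (dictators). A test then fails only when the noise flips the relevant coordinate of $\mu$, so the acceptance probability is at least $1-\varepsilon$.

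\emph{Soundness.} This is the main obstacle. I would expand the acceptance probability in Fourier form:
\[
\Pr[\text{accept}]=\tfrac12+\tfrac12\,\E\Bigl[\sum_{\beta}\hat A_{\pi_2(\beta)}\hat B_\beta^2(1-2\varepsilon)^{|\beta|}\Bigr].
\]
Suppose the acceptance probability is at least $\tfrac12+\varepsilon$. Then the expectation on the right is at least $2\varepsilon$. Folding guarantees that only odd-size $\beta$ contribute, so $\pi_2(\beta)\neq\emptyset$. The noise factor $(1-2\varepsilon)^{|\beta|}$ makes sets with $|\beta|\gg\varepsilon^{-1}\log(1/\varepsilon)$ contribute negligibly.

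From here I would decode labels randomly:
\begin{enumerate}[(i)]
\item for $U$, choose $\alpha$ with probability $\hat A_\alpha^2$ and output a random element of $\alpha$;
\item for $W$, choose $\beta$ with probability $\hat B_\beta^2$ and output a random element of $\beta$.
\end{enumerate}
Combining Cauchy--Schwarz with the bound on $|\beta|$, this labeling satisfies at least a $\mathrm{poly}(\varepsilon)$ fraction of the constraints in expectation. Choosing $\eta$ below this bound contradicts the soundness of the outer verifier. The delicate step is controlling the mass on large $\beta$ so that the decoding probability, which carries a factor $1/|\beta|$, stays bounded below by a function of $\varepsilon$ alone.

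\emph{Range of $\varepsilon$.} The restriction $\varepsilon<1/4$ in the statement only ensures that the two thresholds $(1-\varepsilon)m$ and $(\tfrac12+\varepsilon)m$ are separated, which is what the reduction for Theorem~\ref{thm:n_over_24} needs.
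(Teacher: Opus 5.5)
Your outline is a correct sketch of H\aa stad's argument. It has all the right ingredients: the Label Cover outer verifier, folded long codes, the noisy test $h=g\cdot(f\circ\pi)\cdot\mu$ with completeness $1-\varepsilon$, the Fourier identity in terms of $\hat A_{\pi_2(\beta)}\hat B_\beta^2(1-2\varepsilon)^{|\beta|}$, truncation at $|\beta|=O(\varepsilon^{-1}\log(1/\varepsilon))$, Cauchy--Schwarz, and the $1/|\beta|$-lossy decoding. This is exactly the result the paper imports by citation without reproving it.

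The one detail worth adding concerns the paper's definition, which requires three distinct variables per equation. Tests with $h=\pm g$, where $B_W(g)$ and $B_W(h)$ are the same folded variable, must be discarded. They carry weight at most $2(1-\varepsilon)^L$, which is negligible once the number of parallel repetitions (and hence $L$) is large. The resulting loss is absorbed by running the construction with a slightly smaller parameter.
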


We are now ready to define our reduction. First, for a positive integer $k$, given a \textsc{Max-3-Lin(2)} instance $H=(V,E)$, we construct the $k$-variable blow-up instance $H^k=(V_k, E_k)$ as follows:
\begin{itemize}
    \item Each variable $v \in V$ is replaced by a cloud $C_v$ of $k$ variable copies. Thus, $V_k = \bigcup_{v \in V} C_v$ and $|V_k| = n_k = k \cdot N$.
    \item Each clause $e=(u,v,w,b_e)$ is replaced by the $k^3$ clauses $(u', v', w', b_e)$ for all triplets $(u', v', w')$ of variables from $C_u \times C_v \times C_w$. Thus, $|E_k| = m_k = k^3 m$.
\end{itemize}
For each variable $v \in V_k$, denote by $d_v$ the number of clauses in $E_k$ that contain $v$. Clearly, $\sum_{v \in V_k} d_v = 3m_k$.

Next, we set $n = 12m_k + 4n_k$ and construct the $n$-robotaxi assignment game $\Gamma(H,k)$ as follows:
\begin{itemize}
    \item For each Boolean variable $v \in V_k$, we introduce two sets $X_{v,0}$ and $X_{v,1}$ of $2(d_v + 1)$ {\em variable riders} each. Hence, the total number of riders is 
    \[
    |X| = \sum_{v \in V_k} 4(d_v + 1) = 12m_k + 4n_k = n.
    \]
    \item For each Boolean variable $v \in V_k$, Firm~1 has a set $Y_v$ of $2(d_v + 1)$ {\em good robotaxis}. Firm~1 also has a set $Y_B$ of $6m_k + 2n_k$ {\em bad robotaxis}. In total, Firm~1 has
    \[
    |Y| = \sum_{v \in V_k} 2(d_v + 1) + 6m_k + 2n_k = 12m_k + 4n_k = n
    \]
    robotaxis.
    \item For each clause $e = (u,v,w,b_e) \in E_k$, consider the four binary tuples $(a,b,c) \in \{0,1\}^3$ satisfying $a \oplus b \oplus c = b_e \pmod 2$. Firm~2 has one {\em stealer robotaxi} $Z_{e,a,b,c}$ for each clause $e \in E_k$ and satisfying binary tuple $(a,b,c)$. Firm~2 also has a set $Z_D$ of $8m_k + 4n_k$ {\em dummy robotaxis}. The total number of firm 2's robotaxis is
    \[
    |Z| = 4|E_k| + 8m_k + 4n_k = n.
    \]
    \item The distances between riders and robotaxis in $\Gamma(H,k)$ are as follows. For $v \in V_k$, each good robotaxi in $Y_v$ is at distance $1$ from each variable rider in $X_{v,0}$ and $X_{v,1}$ and at distance $3$ from any other rider. For clause $e=(u,v,w,b_e)$ and satisfying binary tuple $(a,b,c)$, stealer robotaxi $Z_{e,a,b,c}$ is at distance $2$ from riders $X_{u,a}, X_{v,b},$ and $X_{w,c}$ and at distance $4$ from any other rider. Bad and dummy robotaxis are at distance $3$ and $4$ from each rider, respectively.
\end{itemize}
The reduction clearly takes time polynomial in the size of instance $H$ and the parameter $k$.

Now, consider a strategy of Firm 1 in game $\Gamma(H,k)$, i.e., a perfect matching from the robotaxis in $Y$ to the riders in $X$. For $v\in V_k$, a rider $x\in X_{v,0}\cup X_{v,1}$ is called {\em protected} if it is matched to a good robotaxi $y\in Y_v$. In this case, the distance between $x$ and $y$ is $1$. The minimum distance of any robotaxi of Firm 2 from rider $x$ is $2$; hence, rider $x$ cannot be stolen by Firm 2. If instead the rider $x$ is matched to a robotaxi $y$ that does not belong to $Y_v$ (i.e., it is a bad robotaxi or a good robotaxi for some variable different than $v$), we call it {\em vulnerable}. The distance between $x$ and $y$ is $3$ and a stealer robotaxi of Firm 2 at distance $2$ can steal the rider.

We say that a strategy $M$ of Firm 1 in game $\Gamma(H,k)$ is {\em canonical} if for every $v\in V_k$, there is $a_v\in \{0,1\}$ so that all the good robotaxis in $Y_v$ are matched to the variable riders in $X_{v,a_v}$ (and the riders in $X_{v,1-a_v}$ are matched to the bad robotaxis in $Y_B$). Notice that there is a one-to-one correspondence between canonical matchings between the riders in $X$ and the robotaxis of Firm 1 in $Y$ and Boolean assignments to the variables in $V_k$. For a canonical matching $M$, we denote by $\sat(M)$ the number of clauses in $E_k$ that are satisfied by the Boolean assignment to the variables of $V_k$ that corresponds to $M$. 

 We will prove the following two lemmas.

\begin{lemma}\label{lem:opt-canonical}
There is an optimal Stackelberg strategy of Firm 1 in game $\Gamma(H,k)$ that is a canonical matching.
\end{lemma}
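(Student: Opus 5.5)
The plan is to show that any optimal strategy can be converted, one variable at a time, into a canonical strategy without increasing $\nu(M)$. I will track $\nu(M)$ through minimum vertex covers of the stealing graph $G(M)$, using K\H{o}nig's theorem as in Section~\ref{sec:setting}.

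First I record the structure of $G(M)$. A rider of $X_{v,a}$ that is matched to a robotaxi outside $Y_v$ is at distance $3$ from its Firm~1 robotaxi. Dummy robotaxis are at distance $4$, so such a rider is adjacent in $G(M)$ exactly to the stealer robotaxis $Z_{e,\cdot}$ whose tuple assigns bit $a$ to $v$. Call this set of stealers $N_{v,a}$. Each clause containing $v$ has exactly two satisfying tuples with $v$'s bit equal to $a$, so $|N_{v,a}|=2d_v$. Protected riders are isolated in $G(M)$. Hence $G(M)$ depends only on which riders are vulnerable, and all vulnerable riders of $X_{v,a}$ are twins with common neighbourhood $N_{v,a}$. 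It follows that shrinking the vulnerable set never increases $\nu$.

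\emph{Step 1 (saturate the good robotaxis).} The number of riders not served by good robotaxis is $n-\sum_v 2(d_v+1)=6m_k+2n_k=|Y_B|$. Suppose some $y\in Y_v$ is matched to a rider outside $X_{v,0}\cup X_{v,1}$. Then, by counting, some rider of $X_{v,0}\cup X_{v,1}$ is matched outside $Y_v$. Swap their partners (using an augmenting exchange along alternating paths if needed). The foreign rider stays vulnerable, and the rider of $X_{v,0}\cup X_{v,1}$ becomes protected. So the vulnerable set only shrinks, and we may assume every $Y_v$ is matched entirely into $X_{v,0}\cup X_{v,1}$. Then exactly $2(d_v+1)$ riders of $X_{v,0}\cup X_{v,1}$ are vulnerable for each $v$; say $t_0$ of them lie in $X_{v,0}$ and $t_1$ in $X_{v,1}$, with $t_0+t_1=2(d_v+1)$.

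\emph{Step 2 (make one variable canonical).} Fix $v$ with $t_0,t_1\ge 1$ and a minimum vertex cover $C$ of $G(M)$. By the twin property, either all vulnerable riders of side $a$ lie in $C$, or $N_{v,a}\subseteq C$.
\begin{itemize}
\item If $N_{v,a}\subseteq C$ for some $a$, switch to the canonical choice for $v$ in which all of $X_{v,a}$ is vulnerable and all of $X_{v,1-a}$ is protected. Then $C$ minus the now-protected riders is still a cover, of size at most $|C|$.
\item Otherwise $C$ contains all $2(d_v+1)$ vulnerable riders of $v$. Pick either $a$, make $X_{v,a}$ vulnerable and $X_{v,1-a}$ protected, and replace these riders in $C$ by $N_{v,a}$. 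The resulting cover has size $|C|-2(d_v+1)+2d_v<|C|$.
\end{itemize}
In both cases the edges at the other variables' riders are unchanged, so the new set is a cover. The modification changes only the matching inside $Y_v\times(X_{v,0}\cup X_{v,1})$ and the bad robotaxis, so the result is still a perfect matching.

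Iterating Step~2 over all $v$ yields a canonical matching whose $\nu$ is no larger, and this proves the lemma. The step I expect to need the most care is the exchange in Step~1: I must check that the rematching keeps a perfect matching and genuinely does not enlarge the vulnerable set. The cover surgery in Step~2 is routine once the twin structure and the count $|N_{v,a}|=2d_v<2(d_v+1)$ are in place.
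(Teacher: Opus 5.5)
Your proof is correct and is essentially the paper's argument. Step~1 is the same two-edge swap: a plain swap always suffices, and the foreign rider may even become protected if the freed robotaxi is a good robotaxi of its own variable, which only helps. Step~2 is a variable-by-variable version of the paper's joint-minimization argument: once the stealer part of the vertex cover is fixed, the cost contributed by $v$ is linear in $(\mu_{v,0},\mu_{v,1})$, so it is minimized by making only one of $X_{v,0},X_{v,1}$ vulnerable, preferably one whose stealer neighbourhood is already in the cover.
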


\begin{proof}
Let $M$ be an optimal Stackelberg strategy for Firm 1. Recall that Firm 1's payoff is given by $n - \nu(M)$, where $\nu(M)$ is the cardinality of a maximum matching in the stealing graph $G(M)$. Thus, an optimal strategy $M$ is a perfect matching between the set of riders $X$ and Firm 1's robotaxis $Y$ that minimizes $\nu(M)$. We will prove the lemma in three steps by transforming $M$ into a canonical matching without increasing $\nu(M)$.

We first show how to transform $M$ to a matching $M'$ in which every good robotaxi in $Y_v$ is matched to a rider in $X_{v,0}\cup X_{v,1}$ for every $v\in V_k$ and, furthermore, $\nu(M')\leq \nu(M)$. Assume that for some $v\in V_k$, the good robotaxi $y\in Y_v$ is matched to a vulnerable rider $x\not\in X_{v,0}\cup X_{v,1}$. Since $|X_{v,0}\cup X_{v,1}|=4(d_v+1)>2(d_v+1)=|Y_v|$, the pigeonhole principle yields that there exists a vulnerable rider $x'\in X_{v,0}\cup X_{v,1}$ which is matched to a robotaxi $y'\not\in Y_v$. Now, consider the matching $\widehat{M}$ obtained from $M$ after replacing the pairs $(x,y)$ and $(x',y')$ with the pairs $(x,y')$ and $(x',y)$. Since rider $x$ is vulnerable in $M$, their incident edges in the stealing graph $G(\widehat{M})$ exist in $G(M)$. Also, rider $x'$ is protected in $G(\widehat{M})$ and thus has degree $0$. Any other rider has the same incident edges in $G(M)$ and $G(\widehat{M})$. Hence, $G(\widehat{M})$ is a subgraph of $G(M)$ and $\nu(\widehat{M})\leq \nu(M)$. Notice that this process decreases the number of good robotaxis matched to vulnerable riders. By repeating it until this number becomes $0$, we obtain the matching $M'$ with the desired property.

Now, a matching satisfying the property that each good robotaxi in $Y_v$ is matched to some (protected) rider in $X_{v,0}\cup X_{v,1}$ for every $v\in V_k$ can be represented by integers $\mu_{v,a}$ denoting the number of vulnerable riders it defines in set $X_{v,a}$ for $v\in V_k$ and $a\in \{0,1\}$. These integers must satisfy the constraint $\mu_{v,0}+\mu_{v,1}=2(d_v+1)$ for every variable $v\in V_k$.

Then, the stealing graph $G(M)$ consists of edges between vulnerable riders and stealer robotaxis. For $v\in V_k$ and $a\in \{0,1\}$, denote by $I_{v,a}$ the set of stealer robotaxis that are at distance $2$ from the riders in $X_{v,a}$. These include the robotaxi $Z_{e,\tau}$, where $e$ is one of the $d_v$ clauses that includes the variable $v$ and $\tau$ is one of the two tuples with binary values for the variables in $e$, using specifically the value of $a$ for $v$. Hence, $|I_{v,a}|=2d_v$. The vulnerable riders in $X_{v,a}$ are connected with the stealer robotaxis in $I_{v,a}$ forming a complete bipartite graph.

By K\H{o}nig's theorem, the maximum matching size $\nu(M)$ in the stealing graph $G(M)$ is equal to the size of the minimum vertex cover. Let $\mathcal{S}$ denote the set of all possible subsets of stealer robotaxis. For any fixed subset $S \in \mathcal{S}$ chosen to be in the vertex cover, the remaining edges of $G(M)$ must be covered using riders. 

Recall that the stealers in $I_{v,a}$ form a complete bipartite subgraph with the vulnerable riders in $X_{v,a}$. If $I_{v,a} \not\subseteq S$, there is at least one stealer robotaxi in $I_{v,a}$ not in the cover. Because the subgraph is complete bipartite, all $\mu_{v,a}$ vulnerable riders in $X_{v,a}$ must be included in the vertex cover to cover these edges. Conversely, if $I_{v,a} \subseteq S$, no riders from $X_{v,a}$ are required. Thus, the minimum vertex cover size for a given matching configuration $\mu$ is:
\[
\nu(M) = \min_{S \in \mathcal{S}} \left( |S| + \sum_{v \in V_k} \sum_{a \in \{0,1\}} \mu_{v,a} \cdot \one\{I_{v,a} \not\subseteq S\} \right)
\]
Firm 1's objective is to choose a valid matching $M$ (which determines $\mu$) to minimize $\nu(M)$. This is equivalent to finding the global minimum of the joint function $f(\mu, S) = |S| + \sum_{v,a} \mu_{v,a} \one\{I_{v,a} \not\subseteq S\}$ over all valid rider configurations $\mu$ and stealer subsets $S$.

By the properties of joint minimization, we have $\min_\mu \min_S f(\mu,S) = \min_S \min_\mu f(\mu,S)$. Let $S^*$ be the optimal stealer subset for the global minimum. For this fixed $S^*$, Firm 1's optimal configuration $\mu^*$ must minimize the inner sum:
\[
\sum_{v \in V_k} \left( \mu_{v,0} \cdot \one\{I_{v,0} \not\subseteq S^*\} + \mu_{v,1} \cdot \one\{I_{v,1} \not\subseteq S^*\} \right)
\]
subject to $\mu_{v,0} + \mu_{v,1} = 2d_v + 2$ for each $v$. Because this objective is strictly linear with respect to the variables $\mu_{v,0}$ and $\mu_{v,1}$, the minimum on the simplex is always achieved at an extreme point (the boundary). Specifically, for each $v$:
\begin{itemize}
    \item If $\one\{I_{v,0} \not\subseteq S^*\} < \one\{I_{v,1} \not\subseteq S^*\}$, the minimum is achieved by $\mu_{v,0} = 2d_v+2$ and $\mu_{v,1} = 0$.
    \item If $\one\{I_{v,0} \not\subseteq S^*\} \geq \one\{I_{v,1} \not\subseteq S^*\}$, the minimum is achieved by $\mu_{v,0} = 0$ and $\mu_{v,1} = 2d_v+2$.
\end{itemize}
In either case, the optimal solution places all $2d_v + 2$ vulnerable riders on one side and $0$ on the other. This boundary configuration exactly matches the definition of a canonical matching. Therefore, there always exists an optimal Stackelberg strategy $M^*$ that is canonical, completing the proof.
\end{proof}

\begin{lemma}\label{lem:payoff}
For any canonical matching $M$, the Stackelberg payoff of Firm 1 is $8m_k+4n_k+\sat(M)$. 
\end{lemma}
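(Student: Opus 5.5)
The plan is to compute $\nu(M)$ exactly for a canonical matching $M$ and show that $\nu(M)=4m_k-\sat(M)$. Since $n=12m_k+4n_k$, this gives a Stackelberg payoff of $n-\nu(M)=8m_k+4n_k+\sat(M)$. Let $(a_v)_{v\in V_k}$ be the Boolean assignment corresponding to $M$.

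\emph{Step 1: describe $G(M)$.} For each $v$, the riders in $X_{v,a_v}$ are protected and are isolated in $G(M)$. The riders in $X_{v,1-a_v}$ are matched to bad robotaxis at distance $3$. Dummy robotaxis are at distance $4$ from every rider, so they have no edges. A stealer $Z_{e,a,b,c}$ with $e=(u,v,w,b_e)$ is at distance $2$ from $X_{u,a},X_{v,b},X_{w,c}$ and at distance $4$ from everyone else. Hence its neighbours in $G(M)$ are exactly the riders in those of the three groups that are vulnerable, namely the groups with $a\neq a_u$, $b\neq a_v$, $c\neq a_w$ respectively. In particular, $Z_{e,a,b,c}$ is isolated if and only if $(a,b,c)=(a_u,a_v,a_w)$.

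\emph{Step 2: count non-isolated stealers.} Fix a clause $e$; its four stealers correspond to the four tuples of parity $b_e$. If the assignment satisfies $e$, then $(a_u,a_v,a_w)$ is one of these four tuples, so exactly one stealer is isolated. (The other three differ from it in exactly two coordinates, so they have neighbours.) If $e$ is unsatisfied, $(a_u,a_v,a_w)$ has the wrong parity, so every one of the four stealers differs from it in one or three coordinates and is non-isolated. Summing over clauses, the number of non-isolated stealers is $4m_k-\sat(M)$. Only stealers can have edges on Firm 2's side, so these stealers form a vertex cover, and $\nu(M)\le 4m_k-\sat(M)$.

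\emph{Step 3: matching lower bound (the only real step).} I would show that every non-isolated stealer can be matched simultaneously.
\begin{itemize}
\item First assign each non-isolated stealer to one vulnerable group $X_{v,1-a_v}$ that it is adjacent to.
\item The stealers adjacent to $X_{v,1-a_v}$ all lie in $I_{v,1-a_v}$, and $|I_{v,1-a_v}|=2d_v$. So group $v$ receives at most $2d_v$ stealers, while it contains $2(d_v+1)$ riders.
\item Every stealer is adjacent to all riders of a group it touches, so within each group I can match the assigned stealers to distinct riders.
\end{itemize}
This yields a matching of size $4m_k-\sat(M)$ in $G(M)$, hence $\nu(M)=4m_k-\sat(M)$, which proves the lemma.

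Equivalently, Hall's condition holds trivially here because each group has more riders than the total number of stealers that can ever touch it. The slack $+1$ in the group size $2(d_v+1)$ is precisely what makes this step work.
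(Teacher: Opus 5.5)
Your proof is correct, and it establishes the key identity $\nu(M)=4m_k-\sat(M)$ with a different certificate than the paper uses.

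\textbf{How the paper argues.} It works entirely on the vertex-cover side. Take a minimum cover $S$ of $G(M)$ that contains a rider of $X_{v,a}$. Since all vulnerable riders of $X_{v,a}$ share the same neighbourhood, $S$ must contain all $2d_v+2$ of them. Replacing these riders by their $2d_v$ stealer neighbours gives a strictly smaller cover, a contradiction. So minimum covers consist only of stealers, and K\H{o}nig's theorem then identifies $\nu(M)$ with the number of non-isolated stealers.

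\textbf{How you argue.} You pair the obvious cover, namely all non-isolated stealers, with an explicitly constructed matching of the same size. What this buys you:
\begin{itemize}
\item you need only weak duality, not K\H{o}nig's theorem;
\item you avoid the minimality/exchange argument;
\item you obtain an explicit best response for Firm~2 as a by-product.
\end{itemize}

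\textbf{What the two share.} Both rest on the same counting fact $|I_{v,1-a_v}|=2d_v$. Your clause-by-clause parity count of isolated stealers is the same as the paper's.

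\textbf{A small correction.} Your closing remark about the slack is slightly off. Your matching step only needs each vulnerable group to contain at least $2d_v$ riders, so the ``$+1$'' in $2(d_v+1)$ is not what makes your step work. The strict inequality $2d_v+2>2d_v$ is what the paper's exchange argument uses to get a strictly smaller cover.
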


\begin{proof}
We will first show that given any canonical matching $M$ between the riders in $X$ and the robotaxis of Firm 1 in $Y$, the minimum vertex cover of the stealing graph consists entirely of stealer robotaxis and contains no riders. This establishes that $\nu(M)$ is equal to the number of stealer robotaxis with positive degree in the stealing graph $G(M)$. Finally, we will prove that this number is equal to $4m_k-\sat(M)$, which yields the claimed Stackelberg payoff.

Consider a canonical matching $M$ and let $S$ be a minimum vertex cover of the stealing graph $G(M)$. For the sake of contradiction, assume that $S$ contains a rider $x\in X_{v,a}$ for some variable $v\in V_k$ and $a\in \{0,1\}$. By the minimality of $S$, $x$ must be a vulnerable rider; otherwise, $x$ would have degree $0$ in the stealing graph $G(M)$ and could be simply removed. Furthermore, there must exist a stealer robotaxi $z$ adjacent to $x$ in $G(M)$ that does not belong to $S$. If this were not the case, $x$ could be removed from $S$ without leaving any of its incident edges uncovered. Because all the vulnerable riders in $X_{v,a}$ share the exact same neighborhood in $G(M)$ (including the robotaxi $z$), none of these edges can be covered by $z$, meaning that all the $2d_v+2$ vulnerable riders in $X_{v,a}$ must be included in $S$. 

However, we can satisfy the covering requirements for the edges incident to the riders in $X_{v,a}$ in $G(M)$ by replacing these $2d_v+2$ riders in $S$ with their neighboring stealer robotaxis. These neighbors are stealer robotaxis of the form $Z_{e,\tau}$, where $e$ is one of the $d_v$ clauses containing variable $v$, and $\tau$ is one of the two satisfying tuples for $e$ that specifically assigns the value $a$ to variable $v$. Hence, we can replace the $2d_v+2$ riders of $X_{v,a}$ in $S$ with their $2d_v$ neighboring stealer robotaxis and have a valid vertex cover of the stealing graph of strictly smaller size; this contradicts the minimality of $S$.

Therefore, the minimum vertex cover in the stealing graph $G(M)$ consists exclusively of stealer robotaxis. Its size $\nu(M)$ is simply the number of stealer robotaxis with a positive degree in $G(M)$. We now compute this number. 

Let $a=(a_v)_{v\in V_k}$ be the Boolean assignment corresponding to the canonical matching $M$. Recall that for every $v\in V_k$, the riders in $X_{v,a_v}$ are protected, and the riders in $X_{v,1-a_v}$ are vulnerable. For a clause $e=(u,v,w)$ that is satisfied by assignment $a$, the four corresponding stealer robotaxis are $Z_{e,a_u,a_v,a_w}$, $Z_{e,a_u,1-a_v,1-a_w}$, $Z_{e,1-a_u,a_v,1-a_w}$, and $Z_{e,1-a_u,1-a_v,a_w}$. Among these, $Z_{e,a_u,a_v,a_w}$ has only protected riders (i.e., those in $X_{u,a_u}\cup X_{v,a_v} \cup X_{w,a_w}$) at distance smaller than $4$ and, thus, has degree $0$ in the stealing graph $G(M)$. The other three stealer robotaxis have some vulnerable rider at distance $2$ (belonging to either $X_{u,1-a_u}$ or $X_{v,1-a_v}$ or $X_{w,1-a_w}$), giving them positive degree in $G(M)$. For a clause $e=(u,v,w)$ that is not satisfied by assignment $a$, the corresponding stealer robotaxis are $Z_{e,1-a_u,a_v,a_w}$, $Z_{e,a_u,1-a_v,a_w}$, $Z_{e,a_u,a_v,1-a_w}$, and $Z_{e,1-a_u,1-a_v,1-a_w}$. All these stealer robotaxis have some vulnerable riders at a distance smaller than $4$ and thus have a positive degree in the stealing graph, too. 

Hence, the total number of stealer robotaxis with a positive degree in $G(M)$ is $4m_k-\sat(M)=\nu(M)$. We conclude that the Stackelberg payoff of Firm 1 when using the canonical matching $M$ as strategy is
\[
n - \nu(M) = (12m_k + 4n_k) - (4m_k - \sat(M)) = 8m_k + 4n_k + \sat(M). \qedhere
\]
\end{proof}

Next, we would like to get a relation to the optimal number of satisfying clauses in the \textsc{Max-3-Lin(2)} instance $H$ and the optimal Stackelberg payoff of Firm 1 in the game $\Gamma(H,k)$. This will be possible via the next lemma.

\begin{lemma}\label{lem:blow-up}
For every \textsc{Max-3-Lin(2)} instance $H$ and positive integer $k$, it holds
\[
\opt(H^k) = k^3 \cdot \opt(H).
\]
\end{lemma}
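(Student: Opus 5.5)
We prove the two inequalities $\opt(H^k)\geq k^3\cdot\opt(H)$ and $\opt(H^k)\leq k^3\cdot\opt(H)$ separately.

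\textbf{Lower bound.} Take an optimal assignment $\alpha$ for $H$ and give every copy $v'\in C_v$ the value $\alpha(v)$. A blown-up clause $(u',v',w',b_e)$ is then evaluated exactly as the original clause $e=(u,v,w,b_e)$. Each of the $\opt(H)$ satisfied clauses of $H$ yields $k^3$ satisfied clauses of $H^k$, so $\opt(H^k)\geq k^3\cdot\opt(H)$.

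\textbf{Upper bound.} We use averaging, i.e., the probabilistic method. Fix any assignment $\beta$ to $V_k$ and form a random assignment $\alpha$ to $V$ as follows: for each $v\in V$, independently choose a copy $v'\in C_v$ uniformly at random and set $\alpha(v)=\beta(v')$. Consider a clause $e=(u,v,w,b_e)$ of $H$. Its variables $u,v,w$ are distinct, so the copies chosen for $u$, $v$, $w$ are independent and uniform. Hence the triple $(u',v',w')$ is uniform over $C_u\times C_v\times C_w$. The probability that $\alpha$ satisfies $e$ is therefore $1/k^3$ times the number of blown-up copies of $e$ that $\beta$ satisfies. Summing over $e$ by linearity of expectation, $\E[\#\text{clauses of } H \text{ satisfied by }\alpha]=\sat_\beta(H^k)/k^3$. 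Some outcome attains at least the expectation, so $\opt(H)\geq \sat_\beta(H^k)/k^3$. Taking $\beta$ optimal gives $\opt(H^k)\leq k^3\cdot\opt(H)$.

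\textbf{Main obstacle.} The only delicate point is the independence used in the upper bound. It requires the three variables in each clause to be distinct, which the definition of \textsc{Max-3-Lin(2)} guarantees. It also requires that the copies for different original variables be chosen independently, so that the triple is exactly uniform on the product of the three clouds. If repeated clauses appear in $E$, the argument is unaffected, since it is carried out clause by clause with multiplicity.
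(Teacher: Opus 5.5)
Your proof is correct and follows the same approach as the paper: the lower bound copies an optimal assignment of $H$ onto every cloud, and the upper bound picks one copy per cloud uniformly at random and averages. You also make explicit a step the paper's one-line argument leaves implicit, namely that independent choices across clouds, together with the distinctness of the three variables in each clause, make the selected triple uniform on $C_u\times C_v\times C_w$.
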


\begin{proof}
Setting all copies in each cloud $C_v$ to the value of $v$ in an optimal assignment of $H$ satisfies all $k^3$ copies of every satisfied clause, giving $\opt(H^k) \geq k^3 \cdot \opt(H)$. 
Conversely, in the optimal assignment of $H^k$, selecting a variable copy from each cloud $C_v$ uniformly at random defines an assignment for $H$ whose expected number of satisfied clauses is exactly $\frac{1}{k^3}\opt(H^k) \leq \opt(H)$, which implies $\opt(H^k) \leq k^3 \cdot \opt(H)$.
\end{proof}

Our next statement gives the main property of our reduction.

\begin{lemma}\label{lem:reduction}
    Given an instance $H=(V,E)$ of \textsc{Max-3-Lin(2)} and an integer $k$, the optimal Stackelberg payoff of Firm 1 is $8k^3m+4kN+k^3 \opt(H)$.
\end{lemma}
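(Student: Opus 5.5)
The plan is to chain the three preceding lemmas. By Lemma~\ref{lem:opt-canonical}, the optimal Stackelberg payoff of Firm~1 in $\Gamma(H,k)$ is attained by some canonical matching. It therefore equals the maximum of the Stackelberg payoff over canonical matchings only. By Lemma~\ref{lem:payoff}, each canonical matching $M$ has Stackelberg payoff exactly $8m_k+4n_k+\sat(M)$. So the optimal payoff is $8m_k+4n_k+\max_M \sat(M)$, where the maximum ranges over canonical matchings.

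The next step uses the bijection between canonical matchings and Boolean assignments to $V_k$, noted when canonical matchings were defined. Under this bijection, $\sat(M)$ is the number of clauses of $H^k$ satisfied by the corresponding assignment. Hence $\max_M \sat(M)=\opt(H^k)$. I will check that the correspondence is indeed onto every assignment: for any $a\in\{0,1\}^{V_k}$, match $Y_v$ bijectively with $X_{v,a_v}$ (both have size $2(d_v+1)$). Then match the remaining $\sum_v 2(d_v+1)=6m_k+2n_k$ riders with the $6m_k+2n_k$ bad robotaxis. Since the counts agree, this always gives a valid canonical matching.

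Finally, I apply Lemma~\ref{lem:blow-up} to get $\opt(H^k)=k^3\opt(H)$, and substitute $m_k=k^3m$ and $n_k=kN$. This yields $8k^3m+4kN+k^3\opt(H)$.

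There is no real obstacle, since all the substantive work sits in the earlier lemmas. The only point requiring care is the counting check above: every assignment must be realized by a canonical matching, so that the maximum of $\sat$ over canonical matchings is exactly $\opt(H^k)$ and not smaller.
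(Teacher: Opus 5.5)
Your proposal is correct and follows the paper's proof: combine Lemmas~\ref{lem:opt-canonical} and~\ref{lem:payoff} to get an optimal payoff of $8m_k+4n_k+\opt(H^k)$, then apply Lemma~\ref{lem:blow-up} and substitute $m_k=k^3m$ and $n_k=kN$. Your counting check that every Boolean assignment is realized by a canonical matching spells out the one-to-one correspondence that the paper asserts when it defines canonical matchings.
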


\begin{proof}
Using Lemmas~\ref{lem:opt-canonical} and~\ref{lem:payoff}, we have that for every instance $H=(V,E)$ of \textsc{Max-3-Lin(2)} and positive integer $k$, the optimal Stackelberg payoff of Firm 1 is $8m_k+4n_k+\opt(H^k)=8k^3 m + 4k N +k^3 \opt(H)$, as desired. The equality follows from the definitions of $m_k$, $n_k$ and Lemma~\ref{lem:blow-up}.
\end{proof}

We are now ready to complete the proof of Theorem~\ref{thm:n_over_24}. Given a constant $\delta\in (0,1/24)$ and an instance $H=(V,E)$ of \textsc{Max-3-Lin(2)} with $N$ variables and $m$ clauses, we construct the $n$-robotaxi assignment game $\Gamma(H,k)$ for $k=\left\lceil \sqrt{\frac{N}{36\delta m}}\right\rceil$ and set $\varepsilon=3\delta$. Since $\delta$ is a positive constant, the parameter $k$ is polynomial in the size of instance $H$ and the construction takes polynomial time. Using H\aa stad's inapproximability result for \textsc{Max-3-Lin(2)} (Theorem~\ref{thm:hastad}) and Lemma~\ref{lem:reduction}, we get that it is NP-hard to distinguish between the following two cases for game $\Gamma(H,k)$:
\begin{itemize}
    \item The optimal Stackelberg payoff of Firm 1 is at least $9k^3m+4kN-\varepsilon k^3 m$.
    \item The optimal Stackelberg payoff of Firm 1 is at most $8.5k^3m+4kN+\varepsilon k^3 m$.
\end{itemize}
Hence, it is NP-hard to approximate the optimal Stackelberg payoff of Firm 1 within an additive error better than the difference of these quantities, i.e., 
\begin{align*}
\frac{k^3m}{2}-2\varepsilon k^3m &= \frac{n-4kN}{24}-2\varepsilon\, k^3 m = \frac{n}{24}-\left(\frac{N}{6k^2m}+2\varepsilon \right) k^3m\\
&\geq \left(\frac{1}{24}-\frac{N}{72k^2m}-\frac{\varepsilon}{6}\right) n \geq \left(\frac{1}{24}-\delta\right)n,
\end{align*}
as desired. The first two equalities follow by the definitions of $n$, $n_k$, and $m_k$ (i.e., $n=12m_k+4n_k$, $n_k=k N$, and $m_k=k^3 m$), the first inequality by the fact $k^3m\leq n/12$, and the second one by the definition of $\varepsilon$ and $k$.

\section{A dynamic programming algorithm for robotaxis and riders on the line}\label{sec:1D}

In this section, we present a polynomial-time dynamic programming algorithm to compute the optimal Stackelberg payoff for Firm 1 when robotaxis and riders are located on a line. Recall that the objective of Firm 1 is to compute a matching $M$ from its robotaxis in set $Y$ to the riders in set $X$ so that the quantity $n-\nu(G(M))$ is maximized, where $\nu(G(M))$ is the size of the minimum vertex cover in the stealing graph defined by the matching $M$. Assume that a minimum vertex cover in the stealing graph consists of the set of riders $S\subseteq X$ and the set of robotaxis $T\subseteq Z$ of Firm 2 and define the complementary sets $U=X\setminus S$ and $W=Z\setminus T$. Then, the optimal Stackelberg payoff of Firm 1 can be expressed as
\[n-\nu(G(M)) = n-(|S|+|T|)=|U|+|W|-n.\]
By the vertex cover definition, we have that 
\[\mbox{$\forall x\in U, \forall z\in W: (x,z)\not\in G(M)$},\]
and by the definition of the edge set of the stealing graph $G(M)$, the vertex cover definition is equivalent to
\begin{align}\label{eq:w-constraint}
\mbox{$\forall x\in U, \forall z\in W: d(x,M(x))\leq d(x,z)$}.
\end{align}
This property allows us to compute the optimal Stackelberg payoff in the following rather counter-intuitive way. Assuming that we have guessed a set $W$ of robotaxis of Firm 2, we would like to compute a maximum-cardinality set of riders $U$ so that there exists a matching $M$ as a strategy of Firm 1 that yields a stealing graph $G(M)$ with $(X\setminus U)\cup (Z\setminus W)$ as vertex cover. We will argue about the cardinality of $U$ via the matching $M$, by arguing about the number of edges in $M$ that are incident to riders of $U$. Optimizing over the different possible selections of $W$ will then give us the optimal Stackelberg payoff. 

For a location $p$, we denote by $X_p$, $Y_p$, and $Z_p$ the sets of riders, robotaxis of Firm 1, and robotaxis of Firm 2 that are located at point $p$, respectively. For a set $W\subseteq Z$ of robotaxis of Firm 2, we denote by $w_1, w_2, ..., w_k$ their locations so that $w_t<w_{t+1}$ for $t\in [k-1]$ and use the term {\em boundary} alternatively to location for each of them (a boundary $w_t$ can host more than one robotaxi in $W$). For $t\in [k]$, let $z_t$ be a robotaxi of Firm 2 in $W$ located at the boundary $w_t$. We also set $w_0=-\infty$ and $w_{k+1}=+\infty$. We partition the riders in $U$ into those located at the boundaries $w_1, w_2, ..., w_k$ and those located elsewhere. \begin{itemize}
    \item First, consider a boundary $w_t$ for some $t\in [k]$ and notice that a rider $x\in X_{w_t}$ can belong to $U$ only if its matched robotaxi $M(x)$ of Firm 1 is also located at $w_t$. Otherwise, the constraint (\ref{eq:w-constraint}) would be violated. As we are interested in maximizing the cardinality of set $U$, we can include as many riders of $X_{w_t}$ as we can match with robotaxis in $Y_{w_t}$, i.e., $\min\{|X_{w_t}|,|Y_{w_t}|\}$ riders, in $U$. Among the $\max\{0,|Y_{w_t}|-|X_{w_t}|\}$ {\em surplus} robotaxis in $Y_{w_t}$ that are not matched to riders in $X_{w_t}\cap U$, let $L_{w_t}$ and $R_{w_t}$ denote the number of {\em serving-left} and {\em serving-right} surplus robotaxis of $Y_{w_t}$ that are matched under $M$ to riders of $U$ located strictly at the left and strictly at the right of the boundary $w_t$, respectively. Clearly, $L_{w_t}+R_{w_t}\leq \max\{0,|Y_{w_t}|-|X_{w_t}|\}$.

    \item Next, consider a rider $x$ located in the open interval $(w_t,w_{t+1})$ for some $t\in [k]\cup \{0\}$. Rider $x$ belongs to $U$ only if they do not violate the constraint (\ref{eq:w-constraint}). This means that their matched robotaxi $M(x)$ of Firm 1 cannot be located strictly at the left of boundary $w_t$ if $t\in [k]$ or strictly at the right of boundary $w_{t+1}$ if $t\in [k-1]\cup \{0\}$; otherwise, we would have $d(x,M(x))>d(x,z_t)$ or $d(x,M(x))>d(x,z_{t+1})$, violating the constraint (\ref{eq:w-constraint}). Hence, rider $x\in U$ can be matched either to some of the $R_{w_t}$ serving-right surplus robotaxis of Firm 1 located at $w_t$, or to some of the $L_{w_{t+1}}$ serving-left surplus robotaxis of Firm 1 located at $w_{t+1}$, or to some robotaxi of Firm 1 located in the open interval $(w_t,w_{t+1})$, so that the constraint (\ref{eq:w-constraint}) is satisfied for the robotaxis $z_t$ and $z_{t+1}$ of Firm 2 in set $W$. We can compute the maximum number of riders located in $(w_t,w_{t+1})$ that also belong to $U$ by computing a maximum-cardinality matching in the bipartite graph that has all riders in $X$ located in the open interval $(w_t,w_{t+1})$ as vertices on one side, the $R_{w_t}$ serving-right surplus robotaxis of Firm 1 located at the boundary $w_t$, the $L_{w_{t+1}}$ serving-left surplus robotaxis of Firm 1 located at the boundary $w_{t+1}$, and all robotaxis of Firm 1 located in the open interval $(w_t,w_{t+1})$ as vertices on the other side, so that the rider-robotaxi edge $(x,y)$ exists in the graph if and only if $d(x,y)\leq \min\{d(x,w_t),d(x,w_{t+1})\}$. 
\end{itemize}

Notice that it is the location of a robotaxi of Firm 2 in $W$ and not the robotaxi itself that affects the riders that can be included in set $U$. Thus, if $p$ is the location of a robotaxi of Firm 2 that can also be included in $W$, then all robotaxis of Firm 2 that are located at $p$ also belong in $W$. Furthermore, given the distinct locations of the robotaxis of Firm 2 in $W$ (i.e., the boundaries), we can maximize the number of riders that can be included in $U$ by considering the riders in each boundary and each open interval between consecutive boundaries separately. This is the idea behind the following dynamic programming algorithm, which computes the optimal Stackelberg payoff of Firm 1.

Given a robotaxi assignment game, let $p_1, p_2, ..., p_m$ be the distinct locations of all robotaxis of Firm 2 in $Z$, in increasing order by their coordinates. We also set $p_0=-\infty$ and $p_{m+1}=+\infty$ and use the convention $X_{-\infty}=Y_{-\infty}=X_{+\infty}=Y_{+\infty}=\emptyset$. For integers $i,j,r,\ell$ with $0\leq i<j\leq m+1$, $0\leq r\leq \max\{0,|Y_{p_i}|-|X_{p_i}|\}$, and $0\leq \ell \leq \max\{0,|Y_{p_j}|-|X_{p_j}|\}$, we use $U(i,j, r, \ell)$ to denote the maximum number of riders in $U$ located in the open interval $(p_i,p_j)$ when $p_i$ and $p_j$ are consecutive boundaries (i.e., the robotaxis of Firm 2 located at $p_i$ and $p_j$, and no robotaxi of Firm 2 in between, have been selected to be included in set $W$), $r$ serving-right surplus robotaxis of Firm 1 are located at location $p_i$, and $\ell$ serving-left surplus robotaxis of Firm 1 are located at location $p_j$.

For integers $j\in \{0, 1, ..., m+1\}$ and $c\in \{0, 1, ..., \max\{0,|Y_{p_j}|-|X_{p_j}|\}\}$, we use $\DP(j,c)$ to denote the maximum contribution of riders and robotaxis of Firm 2 in the quantity $|U|+|W|$ located in $(-\infty,p_j]$, assuming that we place the robotaxis in $p_j$ in $W$ and reserve exactly $c$ serving-right surplus robotaxis at location $p_j$. As initial condition, we set $\DP(0,0)=0$. The other entries $\DP(j,c)$ are computed recursively as
\begin{align*}
\DP(j,c) &= |Z_{p_j}|+\min\{|X_{p_j}|,|Y_{p_j}|\}+\max_{0\leq i<j}\max_{R_{p_i}}\max_{L_{p_j}}{\left\{\DP(i,R_{p_i})+U(i,j,R_{p_i},L_{p_j})\right\}},
\end{align*}
where the range of $R_{p_i}$ and $L_{p_j}$ in the two $\max$'s are $0\leq R_{p_i}\leq \max\{0,|Y_{p_i}|-|X_{p_i}|\}$, and $0\leq L_{p_j}\leq \max\{0,|Y_{p_j}|-|X_{p_j}|-c\}$, respectively. The optimal Stackelberg payoff is then given by $\DP(m+1,0)-n$.

We now argue that the dynamic programming algorithm above can be implemented to run in time $O(n^3)$. We assume that we have access to the quantities $|X_p|$, $|Y_p|$, and $|Z_p|$ for every location $p$ holding some rider or some robotaxi. Also, each of the quantities $U(i,j,r,\ell)$ is obtained by computing a maximum-cardinality matching in a {\em convex} bipartite graph (i.e., a bipartite graph whose vertices on one side can be ordered such that the neighborhood of each vertex in the opposite partition forms a contiguous interval). This can be done in linear time in terms of the number of vertices in the graph (i.e., at most $O(n)$) using the algorithm of Gabow and Tarjan~\cite{GT85}. Considering the different values the parameters $i$, $j$, $r$, and $\ell$ can take, the number of such computations is at most
\begin{align*}
    &\sum_{i=0}^m\sum_{j=i+1}^{m+1}(1+\max\{0,|Y_{p_i}|-|X_{p_i}|\})(1+\max\{0,|Y_{p_j}|-|X_{p_j}|\}) \leq \sum_{i=0}^m\sum_{j=i+1}^{m+1}(1+|Y_{p_i}|)(1+|Y_{p_j}|)\\
    &\leq \frac{1}{2}\cdot \left(\sum_{i=0}^{m+1}{(1+|Y_{p_i}|)}\right)^2=\frac{1}{2}\cdot \left(m+2+\sum_{i=0}^{m+1}{|Y_{p_i}|}\right)^2 \leq 2 (n+1)^2,
\end{align*}
since both $m$ and $\sum_{i=0}^{m+1}{|Y_{p_i}|}$ are upper-bounded by $n$.
Hence, the precomputation of the $U(\cdot)$ values takes $O(n^3)$ time.

Once the values $U(\cdot)$ have been computed, we can bound the time to evaluate all the $\DP(j,c)$ entries by considering the possible values the parameters $j$ and $c$ can take and the number of combinations of parameters $i$, $R_{p_i}$, and $L_{p_j}$ for each of them. So, the time to evaluate all the $\DP(j,c)$ entries is 
\begin{align*}
&\sum_{j=1}^{m+1}\sum_{c=0}^{\max\{0,|Y_{p_j}|-|X_{p_j}|\}}\sum_{i=0}^{j-1}(1+\max\{0,|Y_{p_i}|-|X_{p_i}|\})(1+\max\{0,|Y_{p_j}|-|X_{p_j}|\}-c)\\
&\leq \sum_{j=1}^{m+1}\sum_{i=0}^{j-1}(1+|Y_{p_i}|)(1+|Y_{p_j}|)^2 \leq \frac{1}{3}\cdot \left(\sum_{j=0}^{m+1}(1+|Y_{p_j}|)\right)^3 \leq \frac{8}{3}\cdot (n+1)^3.
\end{align*}

Hence, the running time of the algorithm is at most $O(n^3)$. The next statement summarizes the discussion in this section.

\begin{theorem}
    On input a robotaxi assignment game with $n$ riders and two firms with $n$ robotaxis each, all located on a line, the algorithm described above runs in time $O(n^3)$ and computes the optimal Stackelberg payoff of Firm~1. 
\end{theorem}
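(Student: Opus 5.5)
The proof has two parts: the running time, which the discussion before the statement already bounds by $O(n^3)$, and correctness, i.e.\ showing that $\DP(m+1,0)-n$ equals $\max_M (n-\nu(M))$. I will prove correctness through the reformulation at the start of the section. By K\H{o}nig's theorem, $n-\nu(M)=\max\{|U|+|W|-n\}$, where the maximum ranges over pairs $U\subseteq X$, $W\subseteq Z$ for which constraint~(\ref{eq:w-constraint}) holds. So the optimal Stackelberg payoff equals the maximum of $|U|+|W|-n$ over triples $(M,U,W)$ satisfying~(\ref{eq:w-constraint}). I will prove two inequalities: every such triple is dominated by a value the DP computes, and every DP value is realized by such a triple.

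\emph{Soundness (DP value is achievable).} Unfold an optimal DP trace into a boundary sequence $p_{i_1}<\dots<p_{i_k}$, with $W$ equal to all Firm~2 robotaxis at these locations, together with surplus counts $R,L$ at each boundary. Then build $U$ and a partial matching as follows. At each boundary, match $\min\{|X_{p}|,|Y_p|\}$ co-located pairs. In each gap, use the maximum matching realizing $U(i,j,R,L)$. These partial matchings use disjoint Firm~1 robotaxis, because $L+R$ never exceeds the surplus; this is exactly what the range of $L_{p_j}$ given $c$ enforces. Extend the result arbitrarily to a perfect matching $M$. Next I check~(\ref{eq:w-constraint}) for each $x\in U$ and every $z\in W$. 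Inside a gap, the edge condition $d(x,y)\le\min\{d(x,w_t),d(x,w_{t+1})\}$ already covers the two nearest boundaries, and any $z\in W$ further away on the line is at least as far. At a boundary, $d(x,M(x))=0$. Finally, $|U|+|W|$ equals the DP value, since each boundary contributes $|Z_{p_j}|+\min\{|X_{p_j}|,|Y_{p_j}|\}$ and each gap contributes its $U$ term.

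\emph{Completeness (optimum is captured).} Take an optimal triple $(M,U,W)$. I first normalize $W$ to be closed under co-location. Adding every Firm~2 robotaxi at a location already used by $W$ keeps~(\ref{eq:w-constraint}) valid, because the constraint depends only on locations, and it does not decrease $|W|$. Let the boundaries be the locations of $W$ and read off the structure of $M$ restricted to $U$, as argued in the bullets preceding the DP:
\begin{itemize}
\item a rider of $U$ at a boundary must be matched at that same location;
\item a rider of $U$ in a gap must be matched within the closed gap, satisfying the distance condition.
\end{itemize}
Define $R_{w_t},L_{w_t}$ as the number of boundary robotaxis used by riders of $U$ in the adjacent gaps. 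The main subtlety is this: at a boundary, $U$ may use fewer than $\min\{|X|,|Y|\}$ co-located pairs while spending more Firm~1 robotaxis there on gap riders. I handle this with an exchange argument. If a boundary robotaxi serves a gap rider while a co-located rider $x\notin U$ is available, reassign that robotaxi to $x$ and add $x$ to $U$; the gap rider is dropped, so $|U|$ does not decrease. Alternatively, if an unused co-located robotaxi exists, simply match it to $x$. After these exchanges, each boundary uses exactly $\min\{|X_p|,|Y_p|\}$ co-located pairs, and $L+R\le\max\{0,|Y_p|-|X_p|\}$. Each gap's contribution is then at most the corresponding $U(\cdot)$ value, because it is a matching in the same graph. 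Summing these contributions along the boundary sequence gives a feasible path through the DP recursion, so $|U|+|W|\le \DP(m+1,0)$.

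I expect the main obstacle to be this exchange argument at the boundaries, together with checking that the DP's split of each surplus into $c$ (serving right, reserved for later) and $L$ (serving left) matches the feasible allocations exactly. The rest is bookkeeping. Combining the two directions with the stated running-time bound proves the theorem.
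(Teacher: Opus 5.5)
Your proposal is correct and follows the paper's own argument. It uses the same independent-set reformulation $|U|+|W|-n$ via K\H{o}nig's theorem, the same decomposition into boundaries and open gaps with serving-left and serving-right surplus robotaxis, and the running-time count from the paper's discussion. Where the paper simply asserts that each boundary should take $\min\{|X_p|,|Y_p|\}$ co-located pairs and that all Firm~2 robotaxis at a chosen location belong to $W$, you prove both inequalities explicitly and justify these claims with a co-location closure step and an exchange argument.
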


\section{Experimental results}\label{sec:exp}
We evaluate our algorithms using the same experimental setup introduced in the robotaxi placement paper~\cite{CKRSS26}. The metric space consists of 263 points corresponding to the centroids of the New York City taxizones, as defined by the NYC Taxi and Limousine Commission (TLC).\footnote{\url{https://www.nyc.gov/site/tlc/}} The demand distribution over these locations is derived from the 2023 High Volume For-Hire Vehicle Trip Dataset provided by TLC.\footnote{\url{https://data.cityofnewyork.us/Transportation/2023-High-Volume-FHV-Trip-Data/u253-aew4/about_data}} Specifically, we utilize the empirical distribution of pickup locations between 8:00 AM and 9:00 AM during weekdays (March 20--24, 2023).\footnote{All relevant data as well as the code can be found as supplementary material.}

For our Stackelberg game experiments, we evaluate instances with $n \in \{50, 100, 200\}$ riders and robotaxis per firm. For each fleet size $n$, we generated 100 independent instances. In each instance, the locations of the leader's fleet (Firm 1), the follower's fleet (Firm 2), and the $n$ riders are sampled independently according to the probability distribution on pickup locations.

We have implemented the following algorithms to compute the matching strategy for the leader:

\begin{table*}[t]
    \centering
    \begin{tabular}{r c c c c c}
        \hline
        $n$ & \textbf{Min-Cost} & \textbf{Greedy} & \textbf{DAG}  & \textbf{LP+BvN} & \textbf{LP Bound} \\
        \hline
        50  & $19.99 \pm 0.63$ & $22.47 \pm 0.58$ & $23.04 \pm 0.58$ & $24.41 \pm 0.59$ & $26.79 \pm 0.54$ \\
        100 & $43.92 \pm 0.87$ & $49.50 \pm 0.77$ & $50.57 \pm 0.76$ & $52.81 \pm 0.78$ & $58.08 \pm 0.66$ \\
        200 & $102.23 \pm 1.32$ & $112.21 \pm 1.08$ & $113.38 \pm 1.04$ & $114.81 \pm 1.06$ & $125.98 \pm 0.96$ \\
        \hline
    \end{tabular}
    \caption{Comparison of the leader's Stackelberg payoff across different matching algorithms for fleet size $n \in \{50, 100, 200\}$.}
    \label{tab:payoff}
\end{table*}

\begin{itemize}
    \item \textbf{Min-Cost:} This algorithm naively solves the standard minimum-cost bipartite matching between the leader's robotaxis and the riders, completely ignoring the follower's presence.
    \item \textbf{Greedy:} This algorithm iteratively selects the minimum-cost edge that is not adjacent to edges included in the matching between the leader's fleet and the riders earlier.
    \item \textbf{Defense-Aware Greedy (DAG):} A refined greedy approach that actively penalizes assignments where the follower's closest robotaxi can easily intercept the rider. It runs the greedy algorithm on the modified cost matrix $\widehat{d}$ with $\widehat{d}(x,y)=d(x,y)+\max\{0,d(x,y)-\min_{z\in Z}{d(x,z)}\}$, for every rider $x\in X$ and robotaxi $y\in Y$ of Firm 1. 
    \item \textbf{LP+BvN:} This is our LP-based approximation algorithm, which computes the fractional solution to the LP relaxation of the problem, applies a Birkhoff-von Neumann decomposition of the fractional solution into a polynomial number of integral matchings, and selects the one that yields the highest Stackelberg payoff.
    \item \textbf{LP Bound:} For comparison, we compute and report the objective value of the LP relaxation.
\end{itemize}

All algorithms, data processing pipelines, and experimental evaluations were implemented in Python 3. For computing standard minimum-cost matchings, we utilized the Python Optimal Transport (POT) library, specifically the highly optimized \texttt{ot.lp.emd} function. The linear programs were solved using SciPy's HiGHS Dual Simplex solver, while the follower's deterministic best-response stealing phase was evaluated using SciPy's sparse maximum bipartite matching implementation.

The payoff performance results are depicted in Table~\ref{tab:payoff}. Each number indicates the mean Stackelberg payoff for the leader, accompanied by its 95\% confidence interval computed across the 100 independent runs. In all cases, the Min-Cost algorithm yields the worst payoff, demonstrating that naively minimizing travel distance without anticipating the follower's predatory strategy leads to severe losses. The Greedy algorithm provides a surprisingly strong baseline, significantly outperforming Min-Cost. Incorporating defensive penalties proves beneficial; the DAG algorithm consistently improves the retained rider count over the standard Greedy approach. Ultimately, our LP-based algorithm yields the best empirical payoffs and achieves the highest retention across all fleet sizes, closely tracking the theoretical LP bound.

While the LP-based algorithm achieves the highest payoffs, this performance comes with a steep computational trade-off. We tracked the execution timings across the 300 instances and found that the running times of both the Greedy and DAG algorithms are substantially better than the LP-based algorithm. For smaller instances ($n=50$), the greedy variants execute in roughly 2 milliseconds on average, while the complete LP+BvN pipeline takes nearly 200 milliseconds per instance. As the fleet size scales to $n=200$, this gap widens drastically. Both Greedy and DAG remain extremely fast, taking approximately 15 milliseconds per instance. In contrast, solving the LP relaxation becomes a significant bottleneck, pushing the execution time of the LP-based algorithm to nearly 45 seconds per instance.

Ultimately, our experiments convey a clear, twofold message. First, in practice, the performance of the LP+BvN algorithm is considerably better than what its theoretical $n/2$-additive bound dictates, consistently securing payoffs that are highly competitive and close to the objective value of the linear program. Second, there are simpler and drastically faster algorithms---namely, the Greedy and DAG approaches---that perform exceptionally well empirically. Even though these simpler heuristics do not provide any non-trivial theoretical guarantees, their strong empirical performance and negligible computational overhead make them highly attractive and practical alternatives for large-scale deployments.

\section{Conclusion and open problems}\label{sec:open}

In this paper, we introduced and analyzed the Stackelberg robotaxi assignment game, modeling the competitive spatial routing dynamics between two centrally controlled autonomous mobility fleets. By formulating the follower's best response via the stealing graph and K\H{o}nig's theorem, we bypassed the min-max bottleneck to design an exact ILP, which we relaxed and rounded using the Birkhoff--von Neumann decomposition to achieve a deterministic Stackelberg payoff within an $n/2$-additive error of the optimum. Complementing this with matching integrality gaps, we established that multiplicative approximations are unattainable (via NP-completeness of deciding strictly positive payoffs) and that an $\Omega(n)$ additive approximation error is structurally inevitable on general metrics. On the positive side, we demonstrated tractability in restricted spatial topologies, developing an $O(n^3)$ exact dynamic program on the Euclidean line. Finally, our empirical evaluation on real-world mobility data revealed that our LP-based algorithm significantly outperforms its theoretical worst-case bound, while simple defense-aware heuristics provide highly competitive empirical payoffs with exceptional computational efficiency.

Our work opens several intriguing avenues for future research. From an approximability standpoint, an immediate open question is to tighten the quantitative gap between our $n/2$ upper bound and the $(1/24 - \delta)\,n$ hardness lower bound for the optimal additive approximation guarantee. Another critical direction is exploring geometric metrics in higher dimensions: the computational complexity on the 2D Euclidean plane or in shortest-path metrics in planar graphs---the most operationally relevant settings for urban road networks---remains open. Does the problem of computing optimal Stackelberg payoff admit a polynomial-time approximation scheme (PTAS), or is it also APX-hard in these cases? Finally, natural game-theoretic generalizations warrant investigation, such as asymmetric fleet capacities and oligopolies with three or more competing firms.

\bibliographystyle{plain}
\bibliography{stackelberg}

@article{korzhyk2011stackelberg,
  title={Stackelberg vs. Nash in security games: An extended investigation of interchangeability, equivalence, and uniqueness},
  author={Korzhyk, Dmytro and Yin, Zhengyu and Kiekintveld, Christopher and Conitzer, Vincent and Tambe, Milind},
  journal={Journal of Artificial Intelligence Research},
  volume={41},
  pages={297--327},
  year={2011}
}

@article{CKRSS26,
  author       = {Ioannis Caragiannis and
                  Kostas Kollias and
                  Mohammad Roghani and
                  Aaron Schild and
                  Ali Kemal Sinop},
  title        = {The Robotaxi Placement Problem: Minimizing Expected {ETA} for Stochastic
                  Demand},
  journal      = {CoRR},
  volume       = {abs/2605.15745},
  year         = {2026},
  eprinttype   = {arXiv},
  eprint       = {2605.15745}
}

@incollection{dantzig1951equivalence,
  author    = {George B. Dantzig},
  title     = {A Proof of the Equivalence of the Programming Problem and the Game Problem},
  booktitle = {Activity Analysis of Production and Allocation},
  editor    = {Tjalling C. Koopmans},
  pages     = {330--335},
  publisher = {John Wiley \& Sons},
  year      = {1951}
}

@article{vonstengel2010leadership,
  title={Leadership games with convex strategy sets},
  author={von Stengel, Bernhard and Zamir, Shmuel},
  journal={Games and Economic Behavior},
  volume={69},
  number={2},
  pages={446--457},
  year={2010}
}

@inproceedings{kiekintveld2009computing,
  author    = {Christopher Kiekintveld and Jason Tsai and Manish Jain and Fernando Ord{\'o}{\~n}ez and Milind Tambe},
  title     = {Computing optimal randomized resource allocations for massive security games},
  booktitle = {Proceedings of the 8th International Conference on Autonomous Agents and Multiagent Systems (AAMAS)},
  pages     = {689--696},
  year      = {2009}
}

@article{adler2013equivalence,
  author    = {Ilan Adler},
  title     = {The equivalence of linear programs and zero-sum games},
  journal   = {International Journal of Game Theory},
  volume    = {42},
  number    = {1},
  pages     = {165--177},
  year      = {2013}
}

@article{bimpikis2019spatial,
  title={Spatial pricing in ride-sharing networks},
  author={Bimpikis, Kostas and Candogan, Ozan and Saban, Daniela},
  journal={Operations Research},
  volume={67},
  number={3},
  pages={744--769},
  year={2019}
}

@article{ke2020pricing,
  title={Pricing and equilibrium in on-demand ride-pooling markets},
  author={Ke, Jintao and Yang, Hai and Li, Xinwei and Wang, Hai and Ye, Jieping},
  journal={Transportation Research Part B: Methodological},
  volume={139},
  pages={411--431},
  year={2020},
  publisher={Elsevier}
}

@article{bryan2019theory,
  title={A theory of multihoming in rideshare competition},
  author={Bryan, Kevin A. and Gans, Joshua S.},
  journal={Journal of Economics \& Management Strategy},
  volume={28},
  number={1},
  pages={89--96},
  year={2019}
}

@article{siddiq2022ride,
  title={Ride-hailing platforms: Competition and autonomous vehicles},
  author={Siddiq, Auyon and Taylor, Terry A.},
  journal={Manufacturing \& Service Operations Management},
  volume={24},
  number={3},
  pages={1511--1528},
  year={2022}
}

@article{ahn2004competitive,
  title={Competitive facility location: the Voronoi game},
  author={Ahn, Hee-Kap and Cheng, Siu-Wing and Cheong, Otfried and Golin, Mordecai and van Oostrum, Ren{\'e}},
  journal={Theoretical Computer Science},
  volume={310},
  number={1-3},
  pages={457--467},
  year={2004}
}

@article{cheong2004one,
  title={The one-round Voronoi game},
  author={Cheong, Otfried and Har-Peled, Sariel and Linial, Nathan and Matou{\v{s}}ek, Ji{\v{r}}{\'\i}},
  journal={Discrete \& Computational Geometry},
  volume={31},
  number={1},
  pages={125--138},
  year={2004}
}

@article{hotelling1929stability,
  title={Stability in competition},
  author={Hotelling, Harold},
  journal={The Economic Journal},
  volume={39},
  number={153},
  pages={41--57},
  year={1929}
}

@inproceedings{conitzer2006computing,
  title={Computing the optimal strategy to commit to},
  author={Conitzer, Vincent and Sandholm, Tuomas},
  booktitle={Proceedings of the 7th ACM conference on Electronic commerce (EC)},
  pages={82--90},
  year={2006}
}

@inproceedings{korzhyk2010complexity,
  title={Complexity of computing optimal {S}tackelberg strategies in security resource allocation games},
  author={Korzhyk, Dmytro and Conitzer, Vincent and Parr, Ronald},
  booktitle={Proceedings of the 24th AAAI Conference on Artificial Intelligence},
  pages={805--810},
  year={2010}
}

@article{briest2010stackelberg,
  title={Stackelberg network pricing games},
  author={Briest, Patrick and Hoefer, Martin and Krysta, Piotr},
  journal={Algorithmica},
  volume={62},
  number={3-4},
  pages={733--753},
  year={2012}
}

@article{bohnlein2023stackelberg,
  title={Stackelberg packing games},
  author={B{\"o}hnlein, Toni and Schaudt, Oliver and Schauer, Joachim},
  journal={Theoretical Computer Science},
  volume={943},
  pages={16--35},
  year={2023}
}

@article{GT85,
  title={A linear-time algorithm for a special case of disjoint set union},
  author={Gabow, Harold N. and Tarjan, Robert E.},
  journal={Journal of Computer and System Sciences},
  volume={30},
  number={2},
  pages={209--221},
  year={1985}
}

@article{hastad2001some,
  title={Some optimal inapproximability results},
  author={H{\aa}stad, Johan},
  journal={Journal of the ACM},
  volume={48},
  number={4},
  pages={798--859},
  year={2001},
  publisher={ACM New York, NY, USA}
}

@incollection{karp1972reducibility,
  title={Reducibility among combinatorial problems},
  author={Karp, Richard M.},
  booktitle={Complexity of Computer Computations},
  editor={Miller, Raymond E. and Thatcher, James W.},
  pages={85--103},
  year={1972},
  publisher={Plenum Press},
  address={New York, NY}
}

@article{hall1935representatives,
  title={On representatives of subsets},
  author={Hall, Philip},
  journal={Journal of the London Mathematical Society},
  volume={10},
  number={1},
  pages={26--30},
  year={1935},
  publisher={Wiley Online Library}
}

@book{cormen2009introduction,
  title={Introduction to Algorithms},
  author={Cormen, Thomas H. and Leiserson, Charles E. and Rivest, Ronald L. and Stein, Clifford},
  edition={3rd},
  year={2009},
  publisher={MIT Press}
}

@book{schrijver1986theory,
  title={Theory of Linear and Integer Programming},
  author={Schrijver, Alexander},
  year={1986},
  publisher={John Wiley \& Sons}
}

@article{birkhoff1946tres,
  title={Tres observaciones sobre el algebra lineal},
  author={Birkhoff, Garrett},
  journal={Univ. Nac. Tucum{\'a}n. Rev. Ser. A},
  volume={5},
  pages={147--151},
  year={1946}
}

@article{konig1931,
  title={Gr{\'a}fok {\'e}s m{\'a}trixok},
  author={K{\H{o}}nig, D{\'e}nes},
  journal={Matematikai {\'e}s Fizikai Lapok},
  volume={38},
  pages={116--119},
  year={1931}
}

@book{von1934marktform,
  title={Marktform und Gleichgewicht},
  author={von Stackelberg, Heinrich},
  year={1934},
  publisher={Springer},
  address={Vienna}
}

@article{von1953certain,
  title={A certain zero-sum two-person game equivalent to the optimal assignment problem},
  author={von Neumann, John},
  journal={Contributions to the Theory of Games},
  volume={2},
  pages={5--12},
  year={1953},
  publisher={Princeton University Press}
}

\appendix
\section{Optimizing expected Stackelberg payoff over mixed strategies} 
\label{sec:mixed-stackelberg}

The continuous relaxation of \ref{eq:ilp} is obtained by replacing the integrality constraints with non-negativity: $x_{i,j} \ge 0$, $u_i \ge 0$, and $v_\ell \ge 0$. Note that $x_{i,j} \le 1$, $u_i \le 1$, and $v_\ell \le 1$ are naturally enforced by constraints (1), (2), and the minimality of the objective.

Recall that the robotaxi assignment game is zero-sum: Firm~1 commits to a strategy to maximize riders served, while Firm~2 maximizes riders stolen. Let $\mathcal{M}_Y$ and $\mathcal{M}_Z$ denote the sets of all perfect bipartite matchings from $X$ to $Y$ and from $X$ to $Z$, respectively. A mixed strategy for Firm~1 is a probability distribution $P = (p_M)_{M \in \mathcal{M}_Y}$. For any mixed strategy $P$, the marginal assignment probabilities are given by
\[
x_{i,j} = \Pr_{M \sim P}[M(i) = j] = \sum_{M \in \mathcal{M}_Y : M(i) = j} p_M.
\]
By construction, $x = (x_{i,j})$ is a doubly stochastic matrix satisfying constraints (1) and (2) of \ref{eq:ilp}. Conversely, by the Birkhoff--von Neumann theorem~\cite{birkhoff1946tres, von1953certain}, every doubly stochastic matrix $x$ can be decomposed in polynomial time into a convex combination of at most $n^2 - 2n + 2$ permutation matrices, defining an equivalent mixed strategy $P$ whose marginal probabilities coincide with $x$. Compactly optimizing over marginal assignment probabilities and subsequently reconstructing mixed strategies via the Birkhoff--von Neumann decomposition is a powerful technique that has been successfully leveraged in Stackelberg security and resource allocation games~\cite{kiekintveld2009computing, korzhyk2010complexity}.

Suppose Firm~1 commits to a mixed strategy with marginals $x$. Observing this, Firm~2 plays a best response $M_2 \in \mathcal{M}_Z$ to maximize the expected number of stolen riders. Rider $i \in X$ is stolen if $d(i, M_2(i)) < d(i, M(i))$, which occurs when $M(i) \in D_{i, M_2(i)}$. By linearity of expectation, the expected number of riders stolen by Firm~2 under strategy $M_2$ is
\[
\mathbb{E}_{M \sim P}\left[ \sum_{i \in X} \mathbb{I}\{d(i, M_2(i)) < d(i, M(i))\} \right] 
= \sum_{i \in X} \Pr_{M \sim P}[M(i) \in D_{i, M_2(i)}] 
= \sum_{i \in X} \sum_{j \in D_{i, M_2(i)}} x_{i,j}.
\]
Firm~2's best-response problem against $x$ is therefore to compute a bipartite matching $M_2$ maximizing the quantity $\sum_{i \in X} \sum_{j \in D_{i, M_2(i)}} x_{i,j}$. Because the bipartite matching polytope is integral and all edge weights are non-negative, the maximum over pure matchings is identical to the optimum of its fractional relaxation:\footnote{Since we use the same linear programming relaxations we used in the proof of Lemma~\ref{lem:upper_bound}, we follow the same labeling here and call them again \ref{eq:lp-3} and \ref{eq:lp-2}.}
\begin{align*}
\text{maximize} \quad & \sum_{i \in X} \sum_{\ell \in Z} f_{i,\ell} \sum_{j \in D_{i,\ell}} x_{i,j} \tag{LP 3} \\
\text{subject to:} \quad & \sum_{\ell \in Z} f_{i,\ell} \le 1, \quad \forall i \in X, \\
& \sum_{i \in X} f_{i,\ell} \le 1, \quad \forall \ell \in Z, \\
& f_{i,\ell} \ge 0, \quad \forall i \in X, \ell \in Z.
\end{align*}
By strong linear programming duality, the optimal objective value of \ref{eq:lp-3} equals that of its dual:
\begin{align*}
\text{minimize} \quad & \sum_{i \in X} u_i + \sum_{\ell \in Z} v_\ell \tag{LP 2} \\
\text{subject to:} \quad & u_i + v_\ell \ge \sum_{j \in D_{i,\ell}} x_{i,j}, \quad \forall i \in X, \ell \in Z, \\
& u_i \ge 0, \quad \forall i \in X, \\
& v_\ell \ge 0, \quad \forall \ell \in Z.
\end{align*}
Subtracting this quantity from $n$, Firm~1's guaranteed expected payoff against Firm~2's best response is
\[
n - \min_{u, v} \left( \sum_{i \in X} u_i + \sum_{\ell \in Z} v_\ell \right) = \max_{u, v} \left( n - \sum_{i \in X} u_i - \sum_{\ell \in Z} v_\ell \right).
\]
Firm~1 chooses the marginal matrix $x$ that maximizes this guaranteed payoff. This joint maximization over doubly stochastic matrices $x$ and dual vectors $(u, v)$ is precisely the continuous linear programming relaxation of \ref{eq:ilp}.

\begin{theorem}
\label{prop:minimax-equivalence}
Let $(x^*, u^*, v^*)$ be an optimal solution to the continuous linear programming relaxation of \ref{eq:ilp} with objective value $\mathrm{OPT}_{\mathrm{LP}}$. Then:
\begin{enumerate}
    \item $\mathrm{OPT}_{\mathrm{LP}}$ is equal to the minimax value of the zero-sum robotaxi assignment game.
    \item The probability distribution $P^*$ obtained from the Birkhoff--von Neumann decomposition of $x^*$ is an optimal mixed Stackelberg strategy for Firm~1, guaranteeing an expected payoff of at least $\mathrm{OPT}_{\mathrm{LP}}$ against any strategy of Firm~2.
    \item Let $f^*$ be the optimal dual variables of \ref{eq:lp-3}. Any completion of $f^*$ to a doubly stochastic matrix, when decomposed via the Birkhoff--von Neumann theorem, yields a mixed strategy $Q^*$ for Firm~2 such that $(P^*, Q^*$ constitutes a mixed Nash equilibrium of the game.
\end{enumerate}
\end{theorem}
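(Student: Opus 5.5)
The plan is to prove the three items in order, using the reformulation derived just above: for any doubly stochastic $x$, Firm~2's best-response value equals the optimum of \ref{eq:lp-3} with constants $x$, and by strong duality this equals the optimum of \ref{eq:lp-2}. Write $g(x)$ for the expected number of stolen riders under a best response to marginals $x$, so that $g(x)=\min\{\sum_i u_i+\sum_\ell v_\ell : (u,v)\text{ feasible for \ref{eq:lp-2} with } x\}$. The relaxation of \ref{eq:ilp} is then exactly $\max_x (n-g(x))$ over doubly stochastic $x$. By the Birkhoff--von Neumann theorem, doubly stochastic matrices are exactly the marginals of mixed strategies of Firm~1. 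The expected payoff against any pure (hence any mixed) strategy of Firm~2 depends on $P$ only through $x$, by the linearity computation in the appendix. It follows that $\opt_{\text{LP}}=\max_P\min_{Q}\E[\text{payoff}]$, i.e., the maxmin value.

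For item~1, I would invoke von Neumann's minimax theorem for finite zero-sum games (the matching sets $\mathcal{M}_Y,\mathcal{M}_Z$ are finite) to identify the maxmin value with the minimax value. This gives $\opt_{\text{LP}}$ equal to the value of the game. Item~2 is then immediate. Decomposing $x^*$ gives $P^*$ with marginals $x^*$. For every pure $M_2$, its expected payoff is $n-\sum_i\sum_{j\in D_{i,M_2(i)}}x^*_{i,j}\ge n-g(x^*)\ge n-\sum u^*-\sum v^*=\opt_{\text{LP}}$, using the feasibility of $(u^*,v^*)$ and weak duality. The bound extends to mixed $Q$ by averaging. Optimality of $P^*$ as a commitment follows because no mixed commitment exceeds the maxmin value, which equals $\opt_{\text{LP}}$.

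For item~3, take $f^*$ optimal for \ref{eq:lp-3} at $x=x^*$. The feasible region of \ref{eq:lp-3} is the bipartite matching polytope with nonnegative weights, so $f^*$ is doubly substochastic. A standard argument then extends it to a doubly stochastic $\bar f\ge f^*$ by filling row and column deficits, which have equal total, e.g.\ via a transportation or greedy northwest-corner fill. Because the weights $\sum_{j\in D_{i,\ell}}x^*_{i,j}$ are nonnegative, $\bar f$ is also optimal for \ref{eq:lp-3}. Decomposing $\bar f$ gives $Q^*$, and every matching in its support achieves the best-response value $g(x^*)$. This uses the fact that the objective is linear in $f$, so a convex combination attains the maximum only if each term does. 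Hence $Q^*$ is a best response to $P^*$.

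The remaining and main obstacle is showing that $P^*$ is a best response to $Q^*$, equivalently that $Q^*$ holds Firm~1 to at most $\opt_{\text{LP}}$. With $f^*$ taken as \emph{optimal dual variables}, my first attempt would be to read $\bar f$ as a Firm~2 optimal strategy via the LP for the minimax side, $\min_{f}\max_x \sum f_{i,\ell}\sum_{j\in D_{i,\ell}}x_{i,j}$. I would check that dualizing the relaxation of \ref{eq:ilp} produces precisely this program, with $f$ the multipliers of the vertex-cover constraints. Then strong duality gives that $\bar f$ guarantees Firm~2 at least $n-\opt_{\text{LP}}$ stolen riders against every $x$. Together with item~1, this shows $(P^*,Q^*)$ is an equilibrium. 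The delicate point is that an arbitrary optimal $f^*$ of \ref{eq:lp-3} at fixed $x^*$ need not be minimax-optimal. So I would interpret ``optimal dual variables'' as the multipliers obtained from the dual of the full relaxation, and verify that this completion preserves the guarantee, which it does since enlarging $f$ only increases the stolen mass.
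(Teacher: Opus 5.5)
Your proposal is correct and follows essentially the paper's route: \ref{eq:lp-2}/\ref{eq:lp-3} duality plus Birkhoff--von Neumann to identify $\opt_{\text{LP}}$ with the maxmin value, the minimax theorem for item~1, dependence on marginals only for item~2, and, for item~3, reading $f^*$ as the multipliers of the vertex-cover constraints in the dual of the full relaxation, then completing it monotonically to a doubly stochastic matrix and decomposing. You are right that an arbitrary optimal solution of \ref{eq:lp-3} at $x^*$ need not work (a matching-pennies-type instance with $n=2$ breaks it), and that is why the paper also switches to the full dual; the only slip is your displayed program, which has the orientation reversed, since after eliminating $\alpha,\beta$ by assignment duality the dual of the relaxation is $n-\max_f\min_x\sum_{i,\ell}f_{i,\ell}\sum_{j\in D_{i,\ell}}x_{i,j}$, although the guarantee you then use is the correct one.
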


\begin{proof}
By von Neumann's Minimax Theorem~\cite{von1953certain}, every finite two-player zero-sum game satisfies
\[
\max_{P \in \Delta(\mathcal{M}_Y)} \min_{Q \in \Delta(\mathcal{M}_Z)} \mathbb{E}_{M_1 \sim P, M_2 \sim Q}[\Pi_1(M_1, M_2)] = \min_{Q \in \Delta(\mathcal{M}_Z)} \max_{P \in \Delta(\mathcal{M}_Y)} \mathbb{E}_{M_1 \sim P, M_2 \sim Q}[\Pi_1(M_1, M_2)] = V^*,
\]
where $\Pi_1$ denotes Firm~1's payoff, $\Delta(\cdot)$ denotes the simplex of mixed strategies, and $V^*$ is the unique minimax value of the game. In any two-player zero-sum game, the leader's optimal mixed Stackelberg payoff mathematically coincides with $V^*$~\cite{conitzer2006computing, vonstengel2010leadership}, and any strategy pair achieving this value forms a mixed Nash equilibrium.

As established above, Firm~1's optimal mixed Stackelberg payoff is given by
\[
\max_{x \text{ doubly stochastic}} \min_{M_2 \in \mathcal{M}_Z} \left( n - \sum_{i \in X} \sum_{j \in D_{i, M_2(i)}} x_{i,j} \right),
\]
which, by LP duality between \ref{eq:lp-2} and \ref{eq:lp-3} and the total unimodularity of bipartite matching, is identical to the optimal objective value $\mathrm{OPT}_{\mathrm{LP}}$ of the continuous relaxation of \ref{eq:ilp}. This proves item~1.

Decomposing $x^*$ via the Birkhoff--von Neumann theorem yields a distribution $P^*$ supported on at most $n^2 - 2n + 2$ matchings in $\mathcal{M}_Y$ whose marginals match $x^*$ exactly. Since Firm~2's expected payoff against $P^*$ depends strictly on these marginals, Firm~2 cannot steal more than $n - \mathrm{OPT}_{\mathrm{LP}}$ riders in expectation under any response, securing Firm~1 an expected payoff of at least $\mathrm{OPT}_{\mathrm{LP}}$. This proves item~2.

Finally, consider the dual of the linear programming relaxation of \ref{eq:ilp}. Assigning dual variables $\alpha_i$ to constraints (1), $\beta_j$ to constraints (2), and $f_{i,\ell} \ge 0$ to constraints (3), the dual LP optimizes over $f$ subject to $\sum_{\ell \in Z} f_{i,\ell} \le 1$ and $\sum_{i \in X} f_{i,\ell} \le 1$, precisely encoding Firm~2's minimax problem of minimizing Firm~1's maximum payoff over all response matchings. Because the objective value of the dual equals $\mathrm{OPT}_{\mathrm{LP}}$, the optimal matrix $f^*$ defines the marginal stealing probabilities of an optimal minimax strategy for Firm~2. Greedily augmenting $f^*$ to a doubly stochastic matrix (by pairing remaining unmatched entries with zero weight) and decomposing it via the Birkhoff--von Neumann theorem yields a mixed strategy $Q^* \in \Delta(\mathcal{M}_Z)$ that prevents Firm~1 from obtaining an expected payoff greater than $\mathrm{OPT}_{\mathrm{LP}}$. Therefore, $(P^*, Q^*)$ is a mutual minimax pair, establishing that $(P^*, Q^*)$ is a mixed Nash equilibrium.
\end{proof}
\end{document}